\documentclass{article}

\usepackage{fullpage}
\usepackage{amsmath,amssymb,amsthm,epsfig,graphicx, color}
\usepackage[linesnumbered, boxruled, lined]{algorithm2e}
\usepackage{latexsym, ae, caption}
\usepackage{psfrag}

\newcommand{\NP}{NP}
\newcommand{\POL}{P}

\newcommand{\N}{\mathbb{N}}

\def\fmax{f_{\mathrm{max}}}
\def\bmax{b_{\mathrm{max}}}
\def\falg{f_{\mathrm{alg}}}
\def\fopt{f_{\mathrm{opt}}}
\def\alg{\mathrm{ALG}}

\newtheorem{thm}{Theorem}

\title{Dynamic Defragmentation of Reconfigurable 
Devices\thanks{A  preliminary, considerably shorter extended-abstract version of this paper
appeared in the proceedings of the FPL~08 \cite{fktvakt-nbddr-08}.}}

\author{
S\'andor~P.~Fekete$^1$ 
\and Tom~Kamphans$^1$\footnote{Supported by
DFG grant FE~407/8-2, 8-3, project ``ReCoNodes'',
as part of the Priority Programme 1148, ``Reconfigurable Computing''.}
\and Nils~Schweer$^1$
\and Christopher Tessars$^1$\footnote{Supported by BMBF grant 03FEPAI2, project ``Advest''.}
\and Jan~C.~van~der~Veen$^1$\footnotemark[2]
\and Josef Angermeier$^2$\footnote{Supported by DFG grant TE~163/14-2, 14-3,
project ``ReCoNodes'', as part of the Priority Programme 1148,
``Reconfigurable Computing''.}
\and Dirk Koch$^2$\footnote{Supported by DFG
grant TE~163/13-2, 13-3, project ``ReCoNets'', as part of the Priority
Programme 1148, ``Reconfigurable Computing''.}
\and J\"urgen Teich$^2$}

\date{
  $^1$
  Department of Computer Science\\
  Braunschweig University of Technology\\
  Braunschweig, Germany\\
  email: \{s.fekete, t.kamphans, n.schweer,\\
  \ \ \ j.van-der-veen\}@tu-bs.de\\[15pt]
  $^2$
  Department of Computer Science 12\\
  University of Erlangen-Nuremberg\\
  Erlangen, Germany\\
  email: \{angermeier, dirk.koch, teich\}@cs.fau.de
}

\begin{document}

\maketitle

% Abstract.
% ---------
\begin{abstract}
We propose a new method for defragmenting the module layout of
a reconfigurable device, enabled by a novel approach for dealing
with communication needs between relocated modules and 
with inhomogeneities found in commonly used FPGAs. 
Our method is based on dynamic
relocation of module positions during runtime, with only
very little reconfiguration overhead; the objective is to
maximize the length of contiguous free space that is available
for new modules. We describe a number of algorithmic
aspects of good defragmentation, and present an optimization
method based on tabu search. Experimental results indicate that
we can improve the quality of module layout by roughly 50\% over
static layout. Among other benefits, this improvement
avoids unnecessary rejections of modules.
\end{abstract}

% First section, often named as Introduction.
% -------------------------------------------
\section{Introduction}
\label{sec:intro}
\subsection{Reconfiguration and Communication}
FPGAs combine the performance of an ASIC implementation with the flexibility of software realizations. 
Partial runtime reconfiguration is an applicable technique to overcome significant area overhead,
monetary cost, higher power consumption, or speed penalties as compared to ASICs (see e.g.~\cite{rose_FPGAgap}).
By loading just the required modules to an FPGA at runtime, it is possible to
build smaller systems and less power-hungry devices.
For instance, an embedded system may start up with some boot-loader and test
modules. These modules may be exchanged by a crypto-accelerator to
speed up the authentication process of the user.
Later, different modules will be loaded to the FPGA by partial runtime
reconfiguration with respect to the user demand or the state of the system.
Note that many systems provide mutually exclusive functionality
(e.g., the record or the play mode of a multimedia device) that is suitable
to share some FPGA resources at runtime.
Furthermore, modules need to communicate with other modules to accomplish their tasks. Therefore,
a suitable communication infrastructure must be applied and the
implied costs in terms of time and area resources must be respected. This
challenge and possible solutions are discussed in Section~2.

\begin{figure}[h!]
\centering
%\vspace{-5 mm}
\includegraphics[width=.8\linewidth]{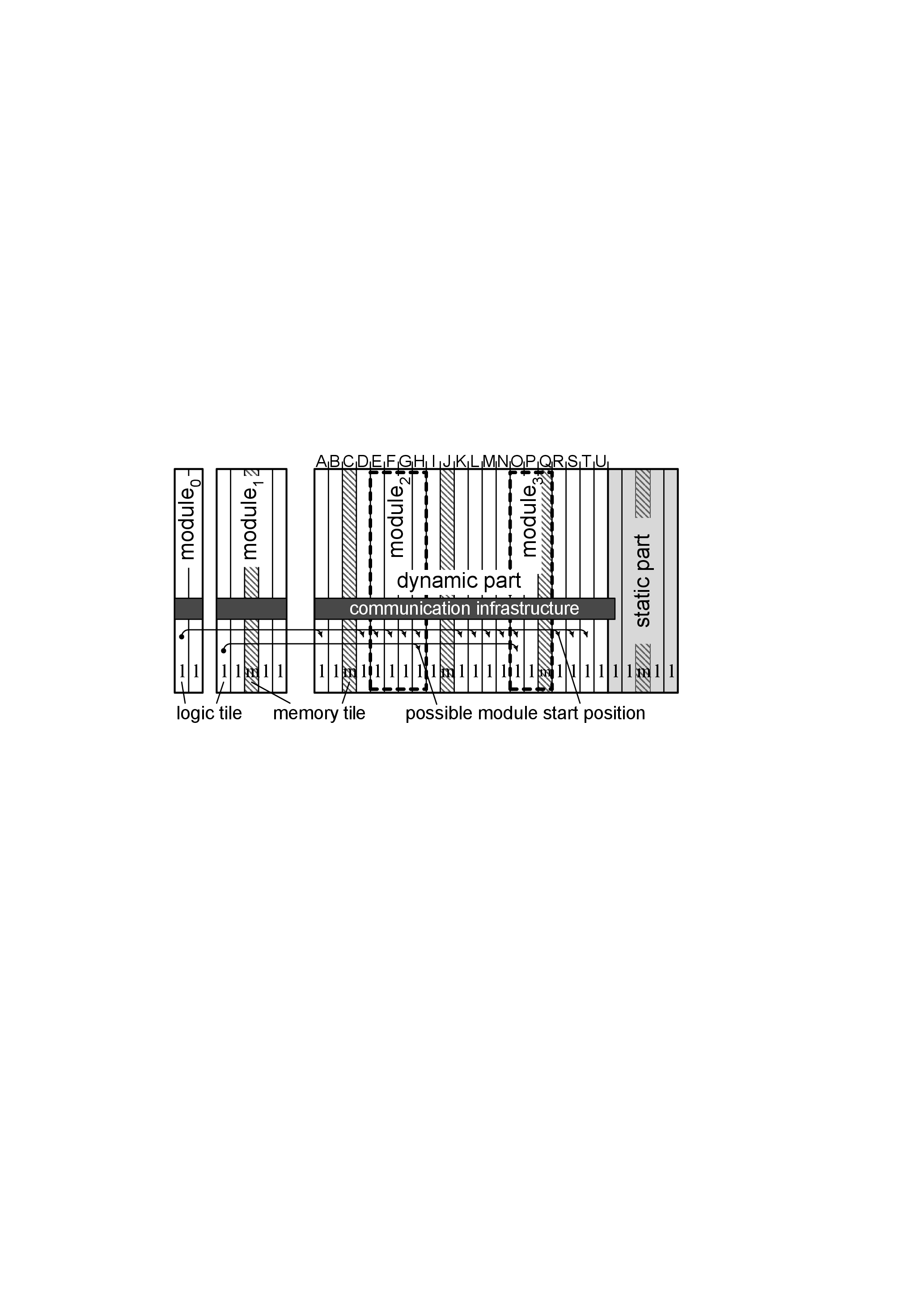}
\caption{\label{fig:fpga_system}Dynamically reconfigurable, tile-oriented system.
The system shares some logic tiles $l$ and memory tiles $m$ among
a set of modules within the dynamic part of the system.
Some modules require a memory tile at a fixed offset with respect
to the start position within the modules
(e.g., the third tile of \textsf{module}$_\textsf{1}$ is a memory tile).
}
\end{figure}

When using such systems, an efficient resource management becomes necessary.
One problem that has to be solved at runtime is the fragmentation of the
tiles due to the time-dependent execution of some modules
on the same resource area.
It is assumed that for dynamically partially reconfigurable systems,
modules are to be vertically aligned column by column, as shown in
Fig.~\ref{fig:fpga_system}.
Accordingly, a module requiring multiple tiles to implement its logic 
will demand a consecutive adjacent set of tiles without any gaps.
This problem is discussed in this paper.

\subsection{Dynamic Storage Allocation on Reconfigurable Devices}
%DK The ever-increasing capabilities of modern reconfigurable devices
The ever-increasing capabilities of modern reconfigu\-rable devices
give rise to a large number of new challenges; solving one of them
in turn gives rise to new possibilities and challenges.
As described above, there are new solutions for dealing with
the communication of relocated devices; this opens up new
possibilities for dynamic relocation of modules.
The resulting challenge is the dynamic
allocation of module requests to a reconfigurable device:
%DK
given an array-shaped reconfigurable device and a sequence of module requests of
%DKvarying size, assign each module to a contiguous set of slots
varying resource requirements (e.g., logic tiles or memory blocks), 
assign each module to a contiguous set of slots
on the device; see Fig.~\ref{storagealloc:fig}(a).

\begin{figure}[t]
  \centerline{\epsfig{figure=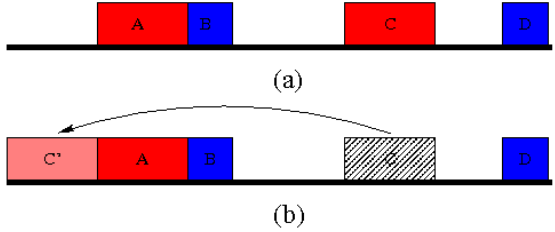,width=74mm}}

\caption{Dynamic storage allocation:
(a) Each module occupies a contiguous block of array positions.
(b) Moving a module to a new position in order to increase maximum
free interval size.}\label{storagealloc:fig}

\end{figure}

At first glance, this problem has a striking resemblance to
one of the classical problems of computing:
{\em Dynamic storage allocation} considers
a memory array and a sequence of storage requests of varying
size, looking for an assignment of each request to a 
contiguous\footnote{Note that this part of the comparison 
refers to classical research;
of course modern storage devices place
%DK virtual memory blocks on discontiguous physical space, at the 
virtual memory blocks on discontiguous physical space, at the 
expense of extra overhead for the pointer structures. This
%CK approach for allocating dicontiguous
approach for allocating discontiguous
space is not possible for placing the modules on a reconfigurable 
device, which is the challenge faced by this paper.}
block of memory cells, such
that the length of each block corresponds to the size of the request.
Once this allocation has been performed,
it is static in space: after a block has been occupied,
it will remain fixed until the corresponding data is no longer needed
and the block is released. As a consequence, a sequence of
allocations and releases can result in fragmentation of
the memory array, making it hard or even impossible to store
new data.

Over the years, a large variety of methods and results for
allocating storage have been proposed. The classical sequential fit
algorithms, First Fit, Best Fit, Next Fit and Worst Fit can be found
in Knuth~\cite{Knuth97} and Wilson et al.~\cite{Wils95}.

Buddy systems partition the storage into a number of standard block
sizes and allocate a block in a free interval of the smallest
standard size sufficient to contain the block. Differing only in the
choice of the standard size, various buddy systems have been 
proposed
\cite{Bromley80,Hinds75,Hirs73,Know65,Shen74,Knuth97}.
Newer approaches that use cache-oblivious structures
for allocating space in memory hierarchies
include the works by Bender et al.~\cite{Bender05,Bender05a}.

There are notable differences between the dynamic allocation of
modules to a reconfigurable device and dynamic storage
allocation. First of all, all modules on a reconfigurable device may execute
in parallel, while on a standalone processor, large blocks in memory
are not used simultaneously. Reconfigurable devices do not provide 
techniques such as
paging and virtual memory mapping that allow arranging memory blocks
next to each other in a virtual way, while they are physically stored 
at non-adjacent
positions. The reconfiguration of a module on a reconfigurable device
implies delays, and an inter-module communication infrastructure is
required, because the functionality of a reconfigurable device may depend on
other modules and external periphery.

Modules on a reconfigurable device can be relocated to a different location on
the reconfigurable device, this can even be done at runtime.  However, today's
synthesis tools still lack support for placing a module
implementation at different positions: these tools often allow placing a
module at only one specific position; thus, we cannot use the same
implementation binary for different positions on the reconfigurable device.
Different techniques have been conceived to tackle this problem. One solution
is to equip the reconfigurable device with a special reconfiguration-management
unit that handles the modification of the module implementations at runtime
such that they can be placed at the desired position. Moreover, in order to
relocate a running module, the module must be paused, the state must be
temporarily saved, the module must be reconfigured at the new position, the
state must be restored, and the module must get a signal to continue its work.
Different techniques have been developed for this task, one of them is
presented by Koch et al.~\cite{KHT07}. In the future, reconfigurable devices
may have additional support for task preemption.

In contrast to memory and storage devices, reconfigurable devices often contain
heterogeneities such as dedicated memories, DSPs, or CPUs. These units enable
or increase performance in important application fields. But heterogeneities
increase the complexity of defragmentation considerably: a module
implementation possibly depends on a specific pattern of heterogeneous
resources at the placement location in order to complete its task. The number
of feasible positions of a module on a FPGA can be increased by creating
different implementations of the same module (i.e., with different positions
for the heterogeneities), but this approach also requires additional storage
space for module implementations. Having different implementations of a module
also increases the number of possibilities when defragmenting the module
placements. Thus, the complexity of the defragmentation problem increases. 

There is a huge amount of related work also from within the FPGA
community: Becker et al.~\cite{blc-erpbr-07} present a method for
enhancing the relocatability of partial reconfigurability of partial
bitstreams for FPGA runtime configuration, with a special focus on
heterogeneities. They study the underlying prerequisites and
technical conditions for dynamic relocation. In the process, a method that
circumvents the problem of having to find fully identical regions for
the modules is solved by the creation of compatible subsets of
resources, enabling a flexible placement of relocatable modules.
Gericota et al.~\cite{gericota05} present a relocation procedure for
Configurable Logic Blocks (CLBs) that is able to carry out online
rearrangements, defragmenting the available FPGA resources without
disturbing functions currently running. Another relevant approach was
given by Compton et al.~\cite{clckh-crdrt-02}, who present a new
reconfigurable architecture design extension based on the ideas of
relocation and defragmentation. It is shown that with
little run-time effort on the part of the CPU and little additional
area-increase over a basic partially reconfigurable FPGA, the
reconfiguration overhead can be reduced tremendously. Koch et
al.~\cite{kabk-faepm-04} introduce efficient hardware extensions to
typical FPGA architectures in order to allow hardware task
preemption. Furthermore, the technical aspects of applying 
hardware task preemption to avoid defragmentation are discussed.
These papers do not consider the algorithmic implications and how the
relocation capabilities can be exploited
to optimize module layout in a fast, practical fashion, which is what we consider in this paper. Koester et
al.~\cite{koester07} also address the problem of
defragmentation. Different defragmentation algorithms that minimize
different types of costs are analyzed. With the help of a simulation
model and a benchmark, simulation results and algorithm comparisons
are presented. However, the problem description differs in some major points;
for example, no heterogeneities in the reconfigurable area are considered.

The general concept of defragmentation is well known, and has been applied to
many fields, e.g., it is typically employed for memory management. Our approach
is significantly different from defragmentation techniques which have been
conceived so far: these require a freeze of the system, followed by
a computation of the new layout and a complete reconfiguration of all modules
at once.  Instead, we just copy one module 
at a time, and simply switch the execution to the new module as soon as the move is complete.
This leads to a {\em seamless, dynamic defragmentation
of the module layout}, resulting in much better
utilization of the available space for modules.

The rest of this paper is organized as follows. In the following Section~2
we give a description of the underlying model and assumptions of the reconfigurable device and application, giving rise
to the problem description in Section~3. As it turns out, solving
the corresponding optimization problem is NP-hard, as shown
in Section~4. 
However, for moderate module
density, it is still possible to compute optimal results,
as shown in Section~5. 
In Section~6, we show that there
are instances for which $\Omega(n^2)$ moves are necessary.
This leads to a heuristic optimization method 
for higher densities, based on tabu search and described
in Section~7. Detailed experimental results are presented and discussed in Section~8 showing an increase in the maximal free space in average by $25\%$ when applying our defragmentation techniques for FPGAs with heterogeneities. On some inputs an increase up to $200\%$ is observed. 
Concluding thoughts are presented in Section~9.

% Second section. % --------------- 

\section{Problem Scenario and Technical Challenges} 
\label{sec:scenario}

% assumption: reconfig overhead neglectable
Each partial reconfiguration of a module on a reconfigurable
device incurs a certain amount of reconfiguration overhead. The
ratio between the reconfiguration time and the actual running time of
the corresponding modules is highly application specific. We assume in
our scenario that the reconfiguration time is sufficiently small
compared to the execution times of modules used. Of course,
there are applications in which the reconfiguration overheads
must be taken into account, because many different modules are loaded on the
reconfigurable device and their execution times are not much higher than their
reconfiguration times. However, the possibility of reconfiguring only a part of
the reconfigurable device as well as techniques such as prefetching, latency hiding,
and bitstream compression can significantly reduce the reconfiguration
overheads. Furthermore, even today, for many applications a module's reconfiguration
time is much less than its execution time even today. So far, it is not known
whether reconfiguration overheads will still play an important role for the
performance of many applications in the future or not. In this paper, we assume
that there will be also many applications in the future for which the
reconfiguration overheads are no big issue.

%%%%%% Weg wegen Reviewer-1-neu
%%%%%% and the portion of these application is increasing over time, as the reconfiguration times drop with each new generation of reconfigurable devices tremendously. See Table \ref{tab:reconfigs} for the theoretical and practically achieved throughput of the reconfiguration interface on well-known reconfigurable devices (for further details see also Claus et al.~\cite{claus09reconfig}). 
%%%%%%
%%%%%%
%%%%%%
%%%%%%\begin{table}
%%%%%%\begin{center}
%%%%%%  \begin{tabular}{| c | c | c | c | c |}
%%%%%%    \hline
%%%%%%    Device type & Interconnect & Frequency & Theo. thr. [MB/s] & Pract. thr. [MB/s]\\ \hline \hline
%%%%%%    Virtex-II & ethernet & 100 & 100 & 50 \\ \hline
%%%%%%    Virtex4 & OPB & 100 & 400 & 350 \\ \hline
%%%%%%    Virtex4 & PLB & 100 & 400 & 371.4 \\ \hline
%%%%%%    Virtex5 & MPMC & 200 & 400 & 400 \\ \hline
%%%%%%  \end{tabular}
%%%%%%\end{center}
%%%%%%\caption{Throughput of ICAP reconfiguration controller on XUPV2P, ML405 and ML507 board (see also Claus et al.~\cite{claus09reconfig}).}
%%%%%%  \label{tab:reconfigs}
%%%%%%\end{table}

% communication overheads
%DK \cite{hagemeyer_ersa07} introduces a system that provides a bus-based
In order to take more benefit from runtime reconfiguration, 
systems should be able to provide the reconfigurable resources in a very
flexible way to the modules. Therefore, a communication infrastructure is
required, such that modules can communicate with each other, and to peripheral
input/output devices.  Most related work for reconfigurable communication
systems is still based on the assumption that the locations allowed for modules
in a partially reconfigurable system are all fixed in size (e.g., Lysaght et
al.~\cite{xilinx_fpl06}). Consequently, such approaches do not allow for
exchanging a large module with multiple smaller ones. This originates from a
lack of adequate communication techniques suitable to connect multiple
partially reconfigurable modules within the same resource area to the rest of
the system.
However, there are notable exceptions: Koch et al.~\cite{koch_fpl08,koch_fccm} present a system with a reconfigurable area
partitioned into 60 tiles, each capable of connecting 
a tiny 8-bit module to the system using
the so-called ReCoBus.
This allows it to implement larger interfaces or modules by combining
multiple
adjacent tiles; e.g., 4~tiles are required for building a 32-bit
interface.
In addition, the ReCoBus can link I/O pins to the partially reconfigurable
modules. Furthermore, this approach of a reconfigurable bus demonstrates
that high placement flexibility, low resource overhead, and high throughput can
be achieved at the same time.

In some partially reconfigurable computing systems, module communication in a
neighbor-to-neighbor-based manner is preferred to using a
reconfigurable bus system:  for example, FPGAs are also used in streaming
applications, such as video processing and packet processing, where each module
communicates concurrently with the next module in the pipeline such that the
communication costs are kept low. If these systems will also benefit from
defragmentation techniques highly depends on the communication constraints of
the modules and on the individual reconfigurable computing system. In general,
one option is to change the communication infrastructure of the modules to a
more flexible system, such as a reconfigurable bus system. This may lead to
increased communication costs, but at the same time, defragmentation techniques
can place modules more freely, and thus yield better results. If the increase
in communication costs is clearly amortized by the improvements due to better
defragmentation results, then the reconfigurable system will benefit from this
option. In a setting in which some modules in a reconfigurable computing system
must be placed closely to each other (e.g., they may strongly rely on a fast
neighbor-to-neighbor communication for performance reasons),
these modules can be grouped together such that they are considered by the
defragmentation strategy as a single module. Therefore, either all these
modules are moved to another position for defragmentation, or no module is
touched. Thus, defragmentation techniques for reconfigurable devices are
flexible enough to accommodate all important technical aspects concerning
module communication on FPGAs. 

% all kinds of heterogeneities
So far, there already exists an enormous and ever-growing number of different
reconfigurable devices. Most of their
reconfigurable area consists of heterogeneities, special-purpose
units such as DSPs, CPUs, or RAMs, which offer a considerable performance
improvement for target applications. 
See, for example, Fig.~\ref{fig:fpga_system}: this FPGA has two
different column types,
\emph{logic tiles} ($l$) and \emph{memory tiles} ($m$).
The important challenge with heterogeneities are placement
limitations: modules applying special-purpose units may not be freely
relocated, but can be placed only at positions offering the same
geometry of special purpose units;
the placement of a module within the reconfigurable resource area on 
the FPGA must fit exactly to the particular module. 
%% The requirement of a module can be modeled as a string and the search 
%% of a valid placement position is then a string matching problem 
%% in the reconfigurable resources provided by the dynamic part of the system. 
%% This restricts the possible module start positions, and 
Thus, the number of free tiles is 
not sufficient to determine whether a module can be placed. 
For instance, \textsf{module}$_\textsf{1}$ in Fig.~\ref{fig:fpga_system} 
has the resource requirement $l\,\,l\,m\,l\,\,l$ and can be placed only at 
the positions \textsf{A}, \textsf{H}, and \textsf{O}, which are currently occupied by 
\textsf{module}$_\textsf{2}$ and \textsf{module}$_\textsf{3}$. 
In the example, the system has 12 free logic tiles and 2 free memory tiles, 
but we are currently not able to place \textsf{module}$_\textsf{1}$ 
on the FPGA, which requires just 4 logic tiles and 1 memory tile. 
Note that our approach does not depend on a
specific type of heterogeneity, it can also be applied to future
reconfigurable devices with new kinds of heterogeneities.

% only column-wise reconfiguration
Our approach is targeted at currently available FPGAs and
future reconfigurable devices. In our problem formulation, we
assume a device that is capable of column-wise partial
reconfiguration, i.e., only whole columns of the reconfigurable area are
exchanged. Modern reconfigurable devices offer also the flexibility to
reconfigure single cells in the reconfigurable area, but this kind of
higher flexibility is not assumed in our problem formulation, because the
column-wise reconfiguration is considered as an important case for
these studies. Therefore, one reason may be that the applied device
cannot provide that kind of higher flexibility, e.g., in order to
save unnecessary costs. Many applications for reconfigurable devices
work in a pipeline-based manner and employ modules that span over the
whole column. They use only modules with the same heights, because allowing 
a greater level of flexibility concerning the placement would also
imply higher resource overheads, e.g., in terms of communication
resources. Furthermore, as long as the heights of all modules are
equal, our approach can also be applied to cell-based reconfigurable
devices using new abstraction layer: we introduce a new type of heterogeneity 
(the ``'separating heterogeneity'') that is not used by any module. Then we
simply connect the horizontal lines of cells of the device to form a single
row, separated by this new heterogeneity; see Fig.~\ref{fig:2Dto1D}. Thus, any
placement of a module on the abstract device can be mapped to a placement on
the original device.

\begin{figure*}[ht]
\begin{center}
 \input{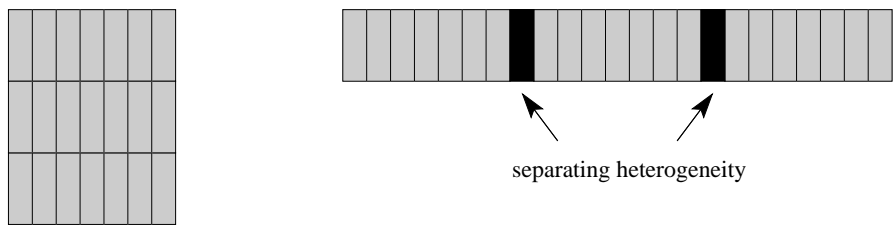}
\end{center}
\caption{\label{fig:2Dto1D}Defragmenting a two-dimensional device. 
Left: A two-dimensional device. Right: The corresponding one-dimensional
device with separating heterogeneities.}
\end{figure*}

Our studies of the important case of column-based
reconfiguration can also be applied to scenarios in which a
cell-based reconfiguration and modules with differing heights
are needed: the local search techniques applied in our approaches can
also be used for finding another suitable place for a module in the
two-dimensional space. The decision which steps to choose can also be
extended from one to two dimensions. Thus,
the proposed approach is not strictly limited to the important case of
column-wise reconfiguration.

When modules are relocated for defragmentation, we have to distinguish between
moving only the module configuration and the configuration together with
the internal state. 
In the first case, we just make a copy of the reconfiguration data to the 
new position and start the next computation on the module at the new 
position (e.g., a discrete cosine transformation on the next 
frame in a video system). 
In the second case, both modules have to be interrupted and the state 
(represented by all internal flip-flop and memory values) will be copied
to the target module. 
It may not be enough to copy the configuration data to a new position,
because the configuration bit files often imply a certain position. Therefore,
it is either necessary to alter dynamically the bit files, or to generate
statically bit files for all possible positions. Relocation of modules and
related problems were already addressed in other works.  Furthermore, the
communication between modules must be stalled during the relocation of the
respective modules. Thus, the communication infrastructure should be flexible
enough to meet these requirements. 
As compared to the reconfiguration process, copying the state can be performed
with short interruption when using hardware checkpointing (for more details see
Koch et al.~\cite{KHT07}). 
 
If we allow overlapping regions for the defragmentation, e.g., the source and
the target module may overlap, then the interruption time can be dominated by
the relocation process: an overlap prevents the possibility to copy the routing
information and logic settings to the destination, while the original module is
still running. In this case, the module must be stopped, the reconfiguration
data and the state of the module must be copied to some (external) memory, and
be restored at the destination. This procedure takes longer if the
regions overlap. As a consequence, we will prevent our defragmentation
algorithms from using overlapping regions to place modules. Thus, switching from
the original module to the new one can be optimized in such a way, that no
input data is lost, and the downtime of the module is minimized. Thus, a
copy of module---without the state---can be reconfigured at the destination
while the module is still running. Therefore, switching between the two modules
is very fast for modules that have only few state data to be copied.
Furthermore, our proposed defragmentation strategies move at most one module at a time to
another position on the reconfigurable area. Thus, only a single module is
affected at a moment by the defragmentation process, while the remaining set of
modules remains untouched. 

\section{Problem Formulation}
\label{sec:problem}

In this paper, we consider a reconfigurable device
that allows allocating modules in a contiguous manner
on an array $L$ of length $\ell$;
modules will be denoted by $M_1,\ldots,M_n$. A module $M_i$
placed in the array occupies a contiguous interval on the reconfigurable device,
denoted by $L_{M_i}$. Modules are always placed such that $L_{M_i}
\cap L_{M_j} = \emptyset$ for $i\neq j$; that is, two different modules
do not overlap.

Modules placed in the array divide $L$ into sections
that are occupied by a module and sections that are not occupied;
the latter are called {\em free intervals}, denoted by $F_1, \ldots, F_k$.
% bisher nicht benoetigt:----------------------------------------------
%For a module $M_i$ we define $F^{\rght}_{M_i}$ to be the free
%interval to the right of $M_i$. If $L_{M_i}$ shares the rightmost
%point with a interval $L_{M_j}$, we define $F^{\rght}_{M_i}$ to
%be of size $f^{\rght}_{M_i}=0$. $F^{\lft}_{M_i}$ is defined in a
%similar manner.
%----------------------------------------------------------------------
Partially reconfigurable devices allow us to relocate a module $M_i$ of size
$m_i$
from interval $L_{M_i}$ to a new position $L'_{M_i}$ within
a free interval $F_j$ of size $f_j$,
provided that the following two conditions are fulfilled:
\begin{enumerate}
\item[$\bullet$] 
No occupied section is chosen (i.e., 
$L'_{M_i} \cap \bigcup_{k=1}^n L_{M_k} = \emptyset$).

\item[$\bullet$] 
The size of the free interval is at least as big as the size of the module:
(i.e., $f_j \geq m_i$).

\end{enumerate}

% Note that the first condition implies that $L_M$ and $L_{M_i}$ are
% not allowed to intersect; that is, the interval occupied by $M_i$
% before the move and the new interval have to be disjoint.
% Furthermore, it ensures that the new position is not occupied by any
% other module. The second condition ensures that there is sufficient
% free space to provide a new position for the module.

The {\em Maximum Defragmentation Problem} (MDP) asks for a
sequence of relocation moves that maximizes the size of the largest
free interval on the reconfigurable device.  We distinguish between
the {\em homogeneous} MDP, in which every cell in the array is
equivalent, and the {\em heterogeneous} MDP, which accounts for
heterogeneities in the given FPGA. Clearly, the heterogeneous MDP is
more difficult. Thus, we focus on the homogeneous MDP for our
complexity results, as their harndess implies hardness of the more complicated,
restricted versions.

\begin{figure*}[t]
\begin{center}
 \input{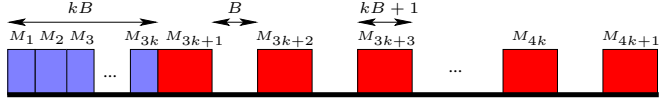}
\end{center}
\caption{\label{fig:npcomlete}Reducing 3-Partition to the MDP.}
\end{figure*}

The larger free interval after the defragmentation can allow to place and execute a module
that could not be placed before.  Moreover,
defragmentation helps to place modules at an earlier time.
Altogether, the {\em makespan} is reduced, i.e., the total
time that is needed to satisfy a sequence of requests (i.e., a
sequence of modules $M_1, \ldots, M_n$), considering that
every module $M_i$ needs a certain time, the {\em duration} $T_i$, to
run on the FPGA before it can be removed.

\section{Problem Complexity}
\label{sec:complexity}

In this section, we state two complexity results for defragmenting modules on a reconfigurable device: one for deciding whether
one contiguous free block can be formed,
and one for the maximization version of the (homogeneous) defragmentation
problem. We show that the decision version is strongly \NP-complete
and that no approximation algorithm with a
useful approximation factor exists for the maximization version,
unless \POL=\NP.

We use a proof technique know as {\em proof by reduction}. That is, we 
take a problem that is known to be hard and show how to transform 
an instance of the known problem to an instance of our problem. 
Thus, if we had an efficient method for solving our problem, it could also be
used for solving the other, hard problem.
The problem {\sc 3-Partition} is the main ingredient of the reduction. It
belongs to the class of strongly NP-complete and can be stated as
follows \cite{Garey79}:

\begin{description}
\item[Given:] A finite set of $3\cdot k$ elements $C_1,\ldots,C_{3k}$ with sizes
$c_1,\ldots,c_{3k}$, a bound $B\in \N$ such that $c_i$ satisfies
$\frac{B}{4} < c_i < \frac{B}{2}$ for $i=1,\ldots,3k$ and 
$\sum_{i=1}^{3k} c_i = k\cdot B$.

\item[Question:]  Can the elements be partitioned into $k$ disjoint sets
$S_1,S_2\ldots,S_k$, such that for $1\leq \ell \leq k$, $\sum_{C_j \in
S_\ell} c_j = B$?
\end{description}

Because the $c_i$ have a lower bound of $\frac{B}{4}$ and an upper
bound of $\frac{B}{2}$, each set $S_j$ contains exactly three
elements. We state our complexity result:

\begin{thm}
The Maximum Defragmentation Problem with free intervals $F_1,\ldots,
F_k$ is strongly \NP-complete.
%; moreover, the problem does not allow
%any deterministic polynomial-time approximation algorithm within any
%polynomial approximation factor, unless \POL=\NP.
\end{thm}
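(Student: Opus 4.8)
The plan is to establish both membership in \NP{} and strong \NP-hardness; the reduction is the substantial part. For membership, a certificate is a sequence of relocation moves together with the final module positions: each move is checked in polynomial time (the chosen target lies inside a single free interval whose size is at least the module's size and is disjoint from all occupied cells), and one verifies that the resulting layout has a free interval of the prescribed length. One also has to argue that whenever a layout with a sufficiently large free interval is reachable at all, it is reachable using only polynomially many moves, so that the certificate has polynomial size; this follows by processing the modules in an order that respects their ``blocking'' relations and using the available free space as a buffer to break the occasional cyclic dependency, keeping the total number of moves polynomial.

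For hardness, I reduce from {\sc 3-Partition}, which is strongly \NP-complete. Given an instance $C_1,\dots,C_{3k}$ with $\frac{B}{4}<c_i<\frac{B}{2}$ and $\sum_{i=1}^{3k}c_i=kB$ (we may assume $k\ge 2$, as instances with $k\le 1$ are trivial), I build the following MDP instance. On an array $L$, place $3k$ \emph{item modules} of sizes $c_1,\dots,c_{3k}$ contiguously at the left, forming a completely occupied \emph{item block} of length $kB$; then append, $k$ times in a row, the pattern consisting of one \emph{separator module} of size $W:=kB$ immediately followed by an empty \emph{gap} of length exactly $B$. The result has exactly $k$ free intervals $F_1,\dots,F_k$, array length $\ell=kB+k(W+B)=k^2B+2kB$, and $4k$ modules; all sizes are polynomial in $k$ and $B$, so this is a valid reduction for strong completeness. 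The decision question asked of the constructed instance is whether a free interval of size at least $kB$ can be produced.

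For the equivalence: if a partition $S_1,\dots,S_k$ with $\sum_{C_j\in S_\ell}c_j=B$ exists, then for each $\ell$ I move the three items of $S_\ell$ one after the other into the left end of the remaining free part of $F_\ell$; each item has size below $\frac{B}{2}<B$ and the three fill the gap exactly, so all moves are legal, and afterwards the entire item block is free. Conversely, assume a move sequence produces a free interval of size $\ge kB$. The total free space equals $kB$ (the $k$ gaps), so this interval has size exactly $kB$ and exhausts the free space; in particular, no separator can have moved before it appeared, since relocating a separator requires a free interval of size $\ge W=kB$, which does not exist until the very end. Hence the separators remain in place and cut the array into the item block and the $k$ gaps, all mutually isolated by occupied cells, so the size-$kB$ free interval must coincide with one of these regions; since each gap has size $B<kB$, it must be the item block, i.e.\ every item has been relocated into the gaps. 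But a gap of length $B$ holds at most three items (four of them total more than $4\cdot\frac{B}{4}=B$), so the $3k$ items split into $k$ triples of total size at most $B$ each; since the overall total is $kB$, every triple has total size exactly $B$, yielding a valid {\sc 3-Partition}.

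The main obstacle is the ``only if'' direction: ruling out that a clever reshuffling also involving the separators could create a large free interval without solving {\sc 3-Partition}. Choosing the separators large enough ($W\ge kB$) so that they cannot move until the goal is already reached removes this difficulty cleanly; a secondary, more routine point is the polynomial bound on the number of moves needed for \NP-membership.
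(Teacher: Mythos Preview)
Your reduction is essentially the paper's: item modules encoding the {\sc 3-Partition} numbers packed contiguously on the left, large separator modules alternating with $k$ gaps of size $B$, and the target free-interval size $kB$. The only noteworthy difference is that the paper takes $k+1$ separators of size $kB+1$ (strictly exceeding the total free space, so they can \emph{never} move and no first-appearance argument is needed), whereas your choice $W=kB$ still works via the first-moment reasoning you give; your explicit sketch of \NP-membership actually goes slightly beyond what the paper writes out.
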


\begin{proof}
Given an instance of the problem {\sc 3-Partition} with input
$c_1,\ldots,c_{3k}$ and bound $B$, we construct an instance of the
MDP in the following way: We place $3k$
modules $M_1,\ldots,$ $M_{3k}$ with $m_i=c_i$, $1\leq i \leq 3k$, side
by side, starting at the left end of $L$. Then starting at the right
boundary of $M_{3k}$, we place $k+1$ modules of size
$kB+1$, alternating with $k$ free intervals of size $B$. We denote 
these modules by
$M_{3k+1}$ to $M_{4k+1}$ and the free intervals by $F_1$ to $F_k$. 
Fig.~\ref{fig:npcomlete} shows the overall structure of the constructed
instance. Now we ask for the construction of a free interval of size
$K=k\cdot B$. Because the size of the total free space is equal to $kB$,
none of the modules $M_{3k+1},\ldots,M_{4k}$ can ever be moved.
Hence, the only way to connect the total space is to move the
modules $M_1$ to $M_{3k}$ to the free intervals. But any solution of
this kind implies a solution to the given instance of {\sc 3-Partition}, concluding the proof of NP-completeness.
% 
% This construction can also be used to establish the claimed
% inapproximability result: as can be seen from Fig.~\ref{fig:npcomlete}~(b),
% the gap between the possible solution values in case of existence
% and nonexistence of a 3-Partition can be made arbitrarily big.
\end{proof}

Proving \NP-completeness for the decision version of a problem 
makes it interesting to consider {\em approximating} the size of the maximal
constructable free intervals: instead of finding the best possible value
$\fopt$, we may be content with an approximate value $\falg$, as long as it
can be found in polynomial time and is within a constant factor of $\fopt$.
The next theorem
shows that the existence of any algorithm with a useful approximation
factor is unlikely, even if we only require an asymptotic factor.

\begin{thm}
\label{th:2}
Let $\alg$ be a polynomial-time algorithm with $\fopt \leq \alpha
\cdot \falg + \beta$. Unless \POL=\NP, $\alpha$ must be big, i.e.,
$\alpha \in \Omega( (n\cdot\max\{\log\fmax,\log\bmax\})^{1-\varepsilon} )$, for any
$\varepsilon > 0$, 
where $n$ denotes the number of modules, $\fmax$ denotes the size of the largest free interval in the input, 
and $\bmax$ the size of the largest module.
%% There is no approximation algorithm for the maximum defragmentation
%% problem with an approximation factor polynomially bounded in the
%% input size of the problem, unless \POL=\NP.
\end{thm}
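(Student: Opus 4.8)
The plan is to turn the reduction from \textsc{3-Partition} behind Theorem~1 into a reduction with a large \emph{multiplicative} gap, by bolting onto that gadget a ``cascade'' whose full reward is collectible only once a valid 3-partition exists.

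\emph{Framework, and how a gap yields the bound.} Because \textsc{3-Partition} is strongly \NP-complete we may assume $B$ (hence every $c_i$) is polynomially bounded in $k$. From such an instance, together with a parameter $t$, I construct in polynomial time an MDP instance $I$ with: yes-instances $\Rightarrow \fopt(I)\ge (t+1)kB-t$, and no-instances $\Rightarrow \fopt(I)\le kB-1$. If $\alg$ satisfies $\fopt\le\alpha\,\falg+\beta$, then on a yes-instance $\falg\ge(\fopt-\beta)/\alpha\ge((t+1)kB-t-\beta)/\alpha$, and on a no-instance $\falg\le\fopt\le kB-1$; so whenever $\alpha<((t+1)kB-t-\beta)/(kB-1)$ the test ``$\falg>kB-1$'' decides \textsc{3-Partition} in polynomial time. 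Hence, unless $\POL=\NP$, $\alpha\ge((t+1)kB-t-\beta)/(kB-1)$, which (with $\beta$ a fixed constant and the instances arbitrarily large) is $\Omega(t)$. Taking $t=k$ gives $n=\Theta(k)$ and all occurring sizes $\mathrm{poly}(k)=\mathrm{poly}(n)$, so $\max\{\log\fmax,\log\bmax\}=O(\log n)$ and $\alpha=\Omega(k)=\Omega(n)=\Omega\big((n\max\{\log\fmax,\log\bmax\})^{1-\varepsilon}\big)$ for every $\varepsilon>0$.

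\emph{The construction.} Place the Theorem~1 gadget $\mathcal{G}$ at the left end of the array: recall that its ``compartment'' (the length-$kB$ block initially holding the $3k$ small modules) becomes a single free interval exactly when \textsc{3-Partition} is solvable, that otherwise at least one small module remains stuck there so the compartment's free part has length $\le kB-1$, and that the $k+1$ modules of size $kB+1$ inside $\mathcal{G}$ cannot move while no free interval has length $\ge kB$. Immediately to the right of $\mathcal{G}$ append the blocks $X_1,F^{(1)},X_2,F^{(2)},\dots,X_t,F^{(t)}$, where $X_i$ is a module of size $ikB-(i-1)$ and each $F^{(i)}$ is an empty interval of length $kB-1$.

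\emph{The two cases, and the crux.} For a yes-instance I exhibit the move sequence: solve $\mathcal{G}$ (push each $B$-summing triple into its gap), freeing the compartment $\Phi_0$ of length $kB$; then inductively relocate $X_i$ into the current free interval $\Phi_{i-1}$, whose length is exactly $|X_i|$, so that $X_i$'s vacated footprint merges with the still-untouched pad $F^{(i)}$ into a free interval $\Phi_i$ of length $(i+1)kB-i$; after $t$ steps $\fopt\ge(t+1)kB-t$. For a no-instance one must bound $\fopt$ over \emph{every} sequence of moves, and this is the step I expect to be the main obstacle. The invariant to establish, by induction over the moves, is that no reachable configuration has a free interval of length $\ge kB$: every module except $\mathcal{G}$'s small ones has size $\ge kB$ and hence stays immovable as long as this holds, the only potential ``stepping stones'' are the pads (length $kB-1$) and the compartment's free part ($\le kB-1$, since a no-instance admits no exact triple-packing), and the permanently immovable modules keep all free intervals pairwise separated, so nothing can ever be enlarged to length $kB$ to let the cascade start. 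Granting this, $\fopt\le\max\{kB-1,B\}=kB-1$; the remaining work — writing out the yes-case sequence cleanly and the arithmetic tying $t,n,\fmax,\bmax$ together — is routine.
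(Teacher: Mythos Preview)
Your no-case invariant---the step you rightly flag as the crux---fails for this construction. The claim that ``the compartment's free part is $\le kB-1$'' in a no-instance tacitly assumes the small modules can only migrate to the $k$ gaps of size $B$ inside $\mathcal{G}$. But each of your pads $F^{(i)}$ has length $kB-1$, far larger than any $c_j<B/2$, so the small modules can escape into the cascade. Concretely, in any no-instance with $t\ge 2$: put two small modules into each $B$-gap (any two fit, since each has size strictly below $B/2$) and dump the remaining $k$ modules, whose total size is below $kB/2<kB-1$, into $F^{(t)}$. The compartment is now completely empty, so $X_1$ (size $kB$) slides in; its vacated footprint merges with the untouched $F^{(1)}$ to give a free interval of length $2kB-1=|X_2|$; and the cascade then runs all the way through $X_t$, yielding $\fopt\ge tkB-(t-1)$ even for a no-instance. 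The yes/no gap is therefore only additive (about $kB$), not multiplicative $\Theta(t)$, and no inapproximability follows from your reduction.

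The paper's construction blocks exactly this leak: every free slot in its cascade has length $B/4$, strictly smaller than every $c_j>B/4$, so the small modules are confined to $\mathcal{G}$ and the compartment empties iff the 3-partition exists. The price is that each cascade stage contributes only $B/2$ rather than $kB-1$ to the growing interval, so the paper needs many stages $r$ and correspondingly huge immovable separators of size $N=kB+1+rB/2$ to manufacture the gap; this is why both $n$ and $\log\bmax$ enter the final bound rather than $n$ alone.
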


\begin{proof}
Refer to Fig.~\ref{fig:nolinapx}.
We will show that if $\alg$ is an $\alpha$-approximation algorithm for \linebreak
$\alpha \in O( (n\cdot\max\{\log\fmax,\log\bmax\})^{1-\varepsilon} )$, it
can be used to decide whether a {\sc 3-Partition} instance is solvable. For a given
instance with numbers $c_1,\ldots,c_{3k}$ and a bound
$B\in \N$ (recall that $\frac{B}{4} < c_i < \frac{B}{2}$), we
construct an allocation of modules inside an array, as shown in the figure.
Starting at the left end of the array we
place $3k$ modules side by side with $b_i = c_i$, for
$i=1,\ldots,3k$. Then, starting at the right boundary of $M_{3k}$,
we place $k+1$ modules of size $N = kB + 1 + rB/2$ (where $r$ is an arbitrary number of sufficient polynomially bounded size; more details will follow), 
alternating with
$k$ free spaces of size $B$. Now, for $i=1,\ldots,r$, we proceed
with a free space of size $B/4$, a module of size $b_{4k+2i} = kB +
(i-1)B/2$, a free space of size $B/4$, and a module of size
$b_{4k+2i+1} = N$.

\begin{figure}[t]
\psfrag{N}{$N$} %
\psfrag{Be}{$B$} %
\psfrag{B1}{$M_1$} %
\psfrag{Bi}{$M_i$} %
\psfrag{B3k}{$M_{3k}$} %
\psfrag{B3kp1}{$M_{3k+1}$} %
\psfrag{B3kp2}{$M_{3k+2}$} %
\psfrag{B3kp3}{$M_{3k+3}$} %
\psfrag{B4k}{$M_{4k}$} %
\psfrag{B4kp1}{$M_{4k+1}$} %
\psfrag{B4kp2}{$M_{4k+2}$} %
\psfrag{B4kp3}{$M_{4k+3}$} %
\psfrag{B4kp2im3}{$M_{4k+2i-3}$} %
\psfrag{B4kp2im2}{$M_{4k+2(i-1)}$} %
\psfrag{B4kp2im1}{$M_{4k+2i-1}$} %
\psfrag{B4kp2i}{$M_{4k+2i}$} %
\psfrag{B4kp2ip1}{$M_{4k+2i+1}$} %
\psfrag{Bd4}{$\frac{B}{4}$} %
\psfrag{kB}{$kB$} %
\psfrag{kB1}{$kB+\frac{i-2}{2}B$} %
\psfrag{kB2}{$kB+\frac{i-2}{2}B + 2\cdot\frac{B}{4}$} %
\psfrag{c1}{$c_1$} %
\psfrag{ci}{$c_i$} %
\psfrag{c3k}{$c_{3k}$} %
\centering \epsfig{file=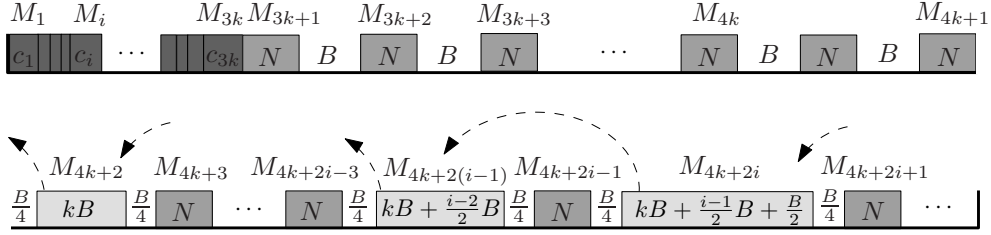,
width=13cm} \caption{\label{fig:nolinapx} The basic idea for the proof of Theorem~\ref{th:2}.
Modules are not drawn to scale; in particular,
modules of size $N=kB +
1 + rB/2$ (gray) for an appropriately large integer $r$ are very large, so they can never be moved. 
The modules $M_1,\ldots,M_{3k}$ encode an instance of {\sc 3-Partition}.
Module $M_{4k+2}$ of size $k\cdot B$ can be
moved if and only if these first $3k$ modules (dark-gray) can be moved to the first
$k$ free spaces of size $B$. Only in this case, for $i=2,\ldots,r$, the modules
$M_{4k+2i}$ (light-gray) fit exactly between $M_{4k+2i-3}$ and
$M_{4k+2i-1}$, increasing the size of the largest free interval by
$B/2$ with every move.}
\end{figure}

Note that the number of modules is $n= 5k + 4r +1$ and $\max\{\log\fmax,\log\bmax\} =
\log\bmax$. We claim that $\falg \geq kB$ if and only if the answer
to the {\sc 3-Partition} instance is ``yes''.

If $\falg \geq kB$, consider the situation in which the first free
space of size $kB$ occurs. Because none of the modules
$M_{3k+1},\ldots,M_{4k+2r+1}$ could be moved so far, and because
the modules $M_{1},\ldots,M_{3k}$ are larger than $B/4$, the only
way to create a free space of size $kB$ is to place the first $3k$
modules in the $k$ free spaces of size $B$. This implies a solution to
the {\sc 3-Partition} instance.

If $\falg < kB$, we show that the 
instance of {\sc 3-Partition} 
cannot be solved. If $\falg < kB$, then
$$
\fopt \leq \alpha \cdot \falg + \beta <
% C \cdot ( n\log \bmax )^{1-\varepsilon} \cdot kB + \beta =
kB \cdot C \cdot ( n \log \bmax )^{1-\varepsilon} + \beta \; ,
$$
for some constant $C$. The total free space has size $f = kB +
rB/2$. Because $n =5k + 4r +1$, $\bmax = N = kB + 1 + rB/2$ and
$k$, $B$, $C$, and $\beta$ are constant a straightforward
computation shows that
$$
\fopt < kB \cdot C \cdot \left[ (5k+4r+1) \log (kB + 1 +
\frac{rB}{2}) \right]^{1-\varepsilon} \hspace{-0.5cm} + \beta \leq
kB + \frac{rB}{2} = f \; ,
$$
for large $r$ (i.e., we choose $r$ such that the second inequality
holds).
Hence, a free space of size $kB + rB/2$ cannot be
constructed.

Conversely, a solution to the {\sc 3-Partition} instance allows the construction of a
free space of size $kB + rB/2$ as follows. The first $3k$ modules
are moved to the $k$ free spaces of size $B$. Now, $M_{4k+2}$ is
moved to the free space of size $kB$ and then, one after the other,
$M_{4k+2i}$ is moved between the modules $M_{4k+2i-3}$ and
$M_{4k+2i-1}$, for $i=2,\ldots,r$.

Thus, we can conclude that the existence of a polynomial-time approximation 
method for the MDP can be used to decide the feasibility of {\sc 3-Partition} instances,
i.e., implies \POL=\NP.
\end{proof}

\section{Moderate Density}
\label{sec:dense} 

The number of modules, $n$, their sizes, and the amount of free space on
the reconfigurable area are highly dependent on the application to be
executed and, furthermore, may also vary enormously during the execution of the application on a reconfigurable device. Initially or
at some later point in time, only a portion of the reconfigurable
device may be used. For a rather moderate density, we
conceived an efficient defragmentation routine.
We consider a special case in
which the homogeneous MDP can be solved with linear computing time in at most
$2n$ moves. We define the density of an array, $L$, of length $\ell$ 
to be $\delta := \frac{1}{\ell}\sum_{i=1}^n m_i$. We show that if
the density is bound by 
$\frac12(1-$ the fraction of the total area occupied by the largest module),
i.e.,
\begin{equation}
\delta \leq \frac{1}{2} - \frac{1}{2} \cdot\frac{\max_{i=1,\ldots,n}\{m_i\}}
{\ell}
\label{eq:lowden2}
\end{equation}
the total free space can always be connected with $2n$ steps by 
Algorithm~1.
The idea of the Algorithm~1 is to start with the leftmost module,
and shift all modules as far as possible to the left,
one after the other. In the second loop, we start with the 
rightmost module and shift modules as far as possible
to the right, one after the other. As it turns out, this results in one
connected free space. (Note that in some cases, a single round of shifts
is sufficient, which can easily be detected; however, two rounds may be necessary
if the initial configuration has small free intervals on the left.)

\begin{algorithm}[h!]\caption{LeftRightShift}
\label{alg:leftrightshift}

\SetLine

\KwIn{A array $L$ with $n$ modules $M_1,\ldots, M_n$ such that
(\ref{eq:lowden2}) is fulfilled.}

\KwOut{A placement of $M_1,\ldots, M_n$ such that there is only one
free interval at the left end of $L$. }

%\SetVline

\For{  $i=1$ \KwTo $n$ } {
        %\uIf{$ f_{M_i}^{left} \geq m_i$ } {
        Shift $M_i$ to the left as far as possible.} % }
\For{  $i=n$ \KwTo $1$ } {
        Shift $M_i$ to the right as far as possible.}

\end{algorithm}

For proving correctness 
of Algorithm \ref{alg:leftrightshift}, 
we need the following two observations.
Both follow immediately from
the definition of density and from (\ref{eq:lowden2}); in the following,
$f_i$ denotes the size of free intervals $F_i$.

\begin{equation}
\sum_{i=1}^k f_i \geq \frac{l}{2} + \frac{1}{2}\cdot
\max_{i=1,\ldots,n}\{m_i\} \label{eq:lowden3}
\end{equation}
\begin{equation}
\delta < \frac{1}{2} \;\; \mbox{and therefore} \;\; \sum_{i=1}^n m_i
< \sum_{j=1}^k f_j \label{eq:lowden4}
\end{equation}

\begin{thm}\label{thm:totfslowden}
Algorithm \ref{alg:leftrightshift} connects the total free space
with at most $2n$ moves and uses $O(n)$ computing time.
\end{thm}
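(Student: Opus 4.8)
\medskip\noindent\textbf{Proof idea.}
The bounds on the number of moves and on the running time are the easy part: each of the two loops of Algorithm~\ref{alg:leftrightshift} relocates every module at most once, for a total of at most $2n$ moves, and if we keep the left endpoints of $M_1,\dots,M_n$ in an array, a shift of $M_i$ only needs the endpoint of the neighbouring module, so each iteration costs $O(1)$ and the whole procedure $O(n)$. The real work is to show that after the two loops the free space is a single interval at the left end of $L$. Throughout I take ``shift $M_i$ as far as possible'' to mean the single legal relocation that slides $M_i$ flush against its neighbour on the chosen side whenever the free interval directly adjacent to $M_i$ on that side has size at least $m_i$, and leaves $M_i$ untouched otherwise; in particular modules never pass one another, so their left-to-right order stays equal to their index order.

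First I would analyze the layout produced by the left-shifting loop. When $M_j$ is processed, $M_{j-1}$ (or, for $j=1$, the left boundary of $L$) already sits in its final place, and $M_j$ either closes the gap to it completely or cannot move at all, because that gap is smaller than $m_j$. Since later iterations only touch modules farther to the right, once the loop ends the gap immediately preceding each $M_j$ has size strictly less than $m_j$. Writing $g_i$ for the total free space lying to the left of $M_i$ at this point, these gaps are exactly the free intervals before $M_1,\dots,M_i$, so summing gives the key estimate $g_i<\sum_{j=1}^{i}m_j\le\sum_{j=1}^{n}m_j$.

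Next I would run the right-shifting loop and argue by downward induction on $i$ that, right after $M_i$ is handled, the modules $M_i,\dots,M_n$ are packed flush against the right end of $L$. The point is that processing $M_n,\dots,M_{i+1}$ only rearranges space strictly to the right of $M_i$, so $g_i$ is still the value from the previous paragraph, and the free interval directly to the right of $M_i$ therefore has size exactly $\Phi-g_i$, where $\Phi:=\ell-\sum_j m_j=\sum_j f_j$ is the total free space. Combining $g_i<\sum_j m_j=\ell-\Phi$ with observation~(\ref{eq:lowden3}) (equivalently $2\Phi\ge\ell+\max_j m_j$) yields $\Phi-g_i>2\Phi-\ell\ge\max_j m_j\ge m_i$, so this interval can accept $M_i$; hence $M_i$ moves to abut $M_{i+1}$ (or the right end if $i=n$), and, since the new position lies entirely to the right of the old one, source and target never overlap. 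When the loop terminates, $M_1,\dots,M_n$ occupy $[\ell-\sum_j m_j,\ \ell)$ and the complement $[0,\Phi)$ is one free interval whose size equals the whole free space, which is precisely the claimed output.

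I expect the delicate step to be the estimate $g_i<\sum_{j\le i}m_j$ used in the induction: it is exactly this consequence of the left-shifting loop, fed into the density hypothesis~(\ref{eq:lowden2}) via~(\ref{eq:lowden3}), that keeps the right-shifting loop from ever getting stuck, and it is also the reason the first loop cannot in general be omitted (a configuration with a small free interval at the far left would block the right-shifting loop immediately). Everything else is bookkeeping about module endpoints.
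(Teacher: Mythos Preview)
Your proposal is correct and follows essentially the same line as the paper's proof: both observe that after the left-shifting loop every remaining gap is strictly smaller than the module immediately to its right, and then combine this with the density hypothesis~(\ref{eq:lowden2})--(\ref{eq:lowden3}) to conclude that the right-shifting loop always has enough room. The paper is terser---it phrases the conclusion as $f_k>m^\star$ for the single rightmost free interval and leaves the second-loop induction implicit---whereas you track $g_i$ and spell out each step (including the non-overlap check), but the underlying argument is the same.
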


\begin{proof}
The number of shifts and the computing time are obvious. We will
show that at the end of the first loop, the rightmost free interval is
greater than any module and therefore {\em all} modules can be
shifted to the right in the second loop.

Let $F_1,\ldots,F_k$ denote the free intervals in $L$ at the end of the
first loop. Then every $F_i$, $i\in \{1,\ldots,k-1\}$ is bounded to
the right by a module $M_{j}$ with $m_{j} > f_i$ (otherwise $m_j$
could be shifted). If this holds for $F_k$ as well, we can
conclude that $\sum_{i=1}^k f_i < \sum_{i=1}^n m_i$, which
contradicts (\ref{eq:lowden4}). Hence, there is no module to the right
of $F_k$, and we get with $m^\star = \max_{1,\ldots,n} \{m_i\}$

%\vspace*{-5mm}
\begin{equation*}
\frac{l}{2} + \frac{1}{2} m^\star %\max_{1,\ldots,n} \{m_i\}
\stackrel{(\ref{eq:lowden3})}{\leq} \sum_{i=1}^k f_i
%\stackrel{(\ref{eq:lowden4})}{<}
< \sum_{i=1}^n m_i + f_k
\stackrel{(\ref{eq:lowden2})}{\leq}
\frac{l}{2} -
\frac{1}{2}m^\star + f_k  %\max_{1,\ldots,n} \{m_i\} + f_k,
\end{equation*}
implying
%\begin{equation*}
%\max_{1,\ldots,n} \{m_i\} < f_k \;.
$m^\star  < f_k$.
% \end{equation*}
\end{proof}

\section{A Quadratic Lower Bound}
As a consequence of the hardness and inapproximability results we
focus on developing
heuristic approaches for the MDP. 
In this section, we bound
the number of steps needed by any algorithm that constructs a
maximum free interval, even in the homogeneous version. 
In the next sections, we state a heuristic and give
experimental results.

\begin{thm}
There is an instance of the maximum defragmentation problem such
that any algorithm needs at least $\Omega(n^2)$ steps to solve it.
\end{thm}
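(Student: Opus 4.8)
The plan is to exhibit an explicit family of \emph{homogeneous} instances $I_n$ with $\Theta(n)$ modules and to prove that \emph{every} sequence of legal relocations that turns $I_n$ into a configuration with a maximum-size free interval has length at least $c\,n^2$ for some constant $c>0$. Because the statement quantifies over all algorithms, the argument cannot rely on the behaviour of any particular strategy. Instead I would exhibit a potential function $\Phi$ on configurations such that (i) every legal move changes $\Phi$ by at most a constant amount, while (ii) the start configuration and every goal configuration differ in $\Phi$ by $\Omega(n^2)$; dividing (ii) by (i) yields the bound. (If one calibrates the sizes finely enough, the reachable configurations can even be forced to form a single path of length $\Theta(n^2)$, but the potential argument is more robust and is the one I would actually carry out.)

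\emph{Construction.} I would build on the ``ratchet'' mechanism already used in the proof of Theorem~\ref{th:2}, but make each tick of the ratchet cost $\Theta(n)$ moves instead of one. First, pad the instance with a few gigantic modules that can provably never be relocated (exactly as the size-$N$ modules in Theorem~\ref{th:2}), so that the entire action is confined to a bounded working region. Inside that region, place a long, tightly packed \emph{barrier block} $W=W_1\cdots W_m$ with $m=\Theta(n)$, together with $\Theta(n)$ further ``spacer'' modules and free slots arranged so that the total free space can be gathered into one interval only if $W$ is translated across the working region by $D=\Theta(n)$ slot-widths, whereas in the start configuration $W$ sits at the opposite end. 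The only way to translate a tightly packed block by one slot-width is to relocate its modules one at a time (a ``rotation'' of the block), which costs $m$ moves; achieving a translation by $D$ thus costs $\Omega(mD)=\Omega(n^2)$ moves. The natural potential is $\Phi=\sum_{j}(\text{left endpoint of the }j\text{-th movable module})$: it grows by at most one slot-width per move, it is unchanged (or smaller) under leftward slides, and since in every goal configuration all $\Theta(n)$ movable modules are packed against the padding $\Theta(n)$ slots to the right of where they started, $\Phi$ must grow by $\Omega(n^2)$.

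\emph{Key steps, in order.} (1) Fix the exact sizes and the initial layout and verify that $I_n$ is \emph{solvable}, i.e., that the total free space really can be collected into one interval. (2) Prove that the padding modules can never move, by the usual total-free-space counting argument, so that all moves stay inside the working region. (3) Prove the central invariant: in every configuration reachable from the start, no free interval is large enough to receive any module except for the forced local slide of a module into the slot immediately adjacent to it; in particular no module can ever jump over another. (4) Using (3), show each move changes $\Phi$ by at most one slot-width, and that every goal configuration has $\Phi$ larger than the start value by $\Omega(mD)=\Omega(n^2)$; conclude.

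\emph{The main obstacle.} The delicate point is step (3). Since a module may legally teleport into \emph{any} sufficiently large free interval anywhere on the array, a careless construction admits ``shortcut'' moves --- for instance a module hopping over the barrier into an accidentally large gap --- that would destroy the quadratic bound. The design must therefore choose the module sizes so that at no reachable moment is a premature large free interval available, effectively forcing the instance to run a unary, percolation-style process. Proving this rigorously requires an induction over the move sequence that simultaneously tracks feasibility, the immovability of the padding, and the ``only adjacent slides are legal'' property. Once that invariant is established, the counting that converts it into the $\Omega(n^2)$ lower bound is routine.
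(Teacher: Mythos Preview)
Your potential-function plan does not survive contact with the MDP move rule. A legal relocation places $M_i$ into a free interval of size at least $m_i$ that is \emph{disjoint} from $M_i$'s current footprint (the paper explicitly forbids overlap of source and target). Hence every single move shifts the left endpoint of the moved module by at least $m_i$, so your potential $\Phi$ changes by at least $\min_i m_i$ per step, not by ``at most one slot-width''; the only way to make that bound literal is to take all $m_i=1$, and then the whole instance is solved in $n$ moves. The companion claim---that translating a tightly packed block $W=W_1\cdots W_m$ across $D$ slots costs $\Omega(mD)$ moves---is likewise false as stated: if a free interval of size $D\ge\max_j|W_j|$ sits to one side of $W$, then relocating $W_m,W_{m-1},\ldots,W_1$ in turn, each abutting the already-moved part, translates the entire block by $D$ in exactly $m$ moves. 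You rightly flag step~(3) as the crux, but the sketch offers no construction realising it, and the intuition you lean on (``only one-slot adjacent slides are ever legal'') is incompatible with the model for any module of size larger than one.

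The paper takes a different route, with no potential. It builds a \emph{nested} instance: sizes $m_j=m_{n+1-j}=n+2-2j$ (so $n,n-2,\ldots,2,2,\ldots,n-2,n$ from the ends inward), separated by unit gaps except for one gap of size two in the middle. Two invariants drive the argument: all module sizes are even, and the total free space strictly between $M_j$ and $M_{n+1-j}$ equals $m_j$ exactly. A parity observation (the distance from any unmoved inner module to either boundary of the current working region is odd, hence cannot be filled by even-sized modules) then forces \emph{every one} of the $n-2j+2$ modules between $M_{j-1}$ and $M_{n+2-j}$ to move before that outer pair first becomes movable. Summing $n-2j+2$ over $j=1,\ldots,n/2$ gives $\Theta(n^2)$. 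The mechanism is ``each outer layer forces a full reshuffle of all inner layers'', not ``each unit of translation costs $m$ moves''.
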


\begin{proof}
We construct the instance in the following way. For an even number $n$,
we place $n$ modules,
indexed from left to right by $1,\ldots,n$. 
The sizes of the modules are $m_j = m_{n+1-j} = n+2-2j$
for $1\leq j \leq \frac{n}{2}$. $M_1$ has a free interval of size
$1$ to its left and $M_n$ has a free interval of size
$1$ to its right.
In addition, every pair of consecutive modules is separated by a
free interval of size one, except for the pair $M_{\frac{n}{2}}$ and
$M_{\frac{n}{2}+1}$, which is separated by a distance of two. In
this initial configuration we denote the free intervals by
$F_1,\ldots,F_{n+1}$, and their sizes by $f_1,\ldots,f_{n+1}$. 
Fig.~\ref{fig:lbnsquare} shows an example for $n=8$.

\begin{figure}[ht]
\centering
\input{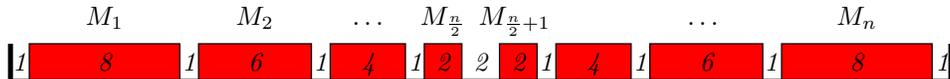}
\caption{\label{fig:lbnsquare} The instance for $n=8$.}
\end{figure}

The following properties of this instance are essential for the rest
of the proof:

\begin{list}{}{}

\item[(i)] The module sizes $m_j = m_{n+1-j} = n+2-2j$, $1\leq j \leq
\frac{n}{2}$, are even.

\item[(ii)] $m_j = m_{n+1-j} = \sum_{i=j+1}^{n+1-j} f_i$ holds for
any pair $M_j$, $M_{n+1-j}$ (i.e., the total free interval between two
modules of equal size is equal to the modules' sizes).

\item[(iii)] Every module has to be moved at least once (because of
the small free intervals at the left and right end of $L$).

\item[(iv)] For  $i < j \leq \frac{n}2$, we have
$m_i = m_{n+1-i} \geq m_{n+1-j} =m_j$; in particular, this
means that the pair $M_i$, $M_{n+1-i}$ can be
moved only if $M_j$, $M_{n+1-j}$ can be moved.

\end{list}

In the beginning, only the modules $M_{\frac{n}{2}}$ and
$M_{\frac{n}{2}+1}$ with $m_{\frac{n}{2}}=m_{\frac{n}{2}+1}=2$ can
be moved. Using three moves, a free interval of size four can be
constructed. Note that both modules have to be moved and that there
can be a free interval of size four only if there is exactly one free interval 
between $M_{\frac{n}{2}-1}$ and $M_{\frac{n}{2}+2}$. Now we show
by induction for $j$ from $\frac{n}{2}$ to $1$ that
\begin{list}{}{}

\item[(a)] at least $n-2j+2$ steps are necessary to make a pair
$M_{j-1}$, $M_{n+1-(j-1)}$ movable after the pair $M_{j}$,
$M_{n+1-j}$ became movable and

\item[(b)] in the situation in which $M_{j-1}$, $M_{n+1-(j-1)}$
become movable, there is exactly one free interval between these two
modules.

\end{list}

Both properties clearly hold for $j=\frac{n}{2}$ and we assume that
$M_j$ and $M_{n+1-j}$ for $1 \leq j < \frac{n}{2}$ became movable (for
the first time) by the last step.

By part (b) of the induction hypothesis, the modules and free intervals
in the area between $M_{j-1}$ and $M_{n+1-j}$ are currently arranged
in the following order, described from left to right: a free interval 
of size one, a sequence of modules, a free interval of size $m_j$, a
sequence of modules, and a free interval of size one (see 
Fig.~\ref{fig:lbnsquare2}). The modules in the rest of $L$ are still in
their initial position (otherwise $M_j$ and $M_{n+1-j}$ could have
been moved earlier because of (iv)).

\begin{figure}[ht]
\centering
\input{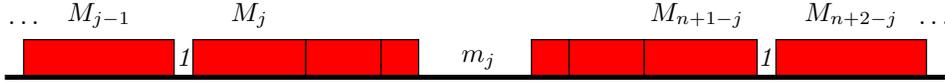}
\caption{\label{fig:lbnsquare2} The situation when $M_j$ and
$M_{n+1-j}$ can be moved for the first time.}
\end{figure}

Property (b) is a straightforward implication of (ii) and we show
that (a) holds as well. Suppose for a contradiction that $M_{j-1}$
and $M_{n+2-j}$ can be made movable without shifting or ``jumping'' a module
$M_k$ with $j \leq k \leq n+1-j$, i.e., without moving a modules that lies
between $M_{j-1}$ and $M_{n+2-j}$). We assume w.l.o.g.\ that $M_k$ is in
the same sequence as $M_j$. Thus, the distance from $M_k$'s left boundary
to the right boundary of $M_{j-1}$ can be calculated as the sum of
the sizes of modules lying on the left side of $M_k$ plus one. By
(i), this is an odd number. The same holds for the distance from
$M_k$'s right boundary to the left boundary of $M_{n+2-j}$. Again
using (i), this implies that none of these intervals can completely be
filled with other modules. Hence, by (ii), $M_{j-1}$ and $M_{n+2-j}$ 
can never be moved without moving $M_k$. There are $n-j+ 2 -j-1
+1 = n-2j+2$ modules initially placed between $M_{j-1}$ and
$M_{n+2-j}$ and each of them has to be moved.

Altogether, this implies a lower bound of
$\sum_{j=1}^\frac{n}{2}(n-2j+2) = \frac{n^2}{4} + \frac{3n}{2}$ on
the total number of steps.
\end{proof}

\section{A Heuristic Method}
\label{sec:heuristic}

For runtime defragmentation, we propose a tabu search with a tabu list of length
$\frac{n}{2}$, see Algorithm~2. In every iteration, all homogeneous modules $M_i$ are
moved to the left end and to the right end of the free intervals
that are greater than or equal to $m_i$. All inhomogeneous modules
are moved to any feasible position. Each move is evaluated by a
fitness function that divides the size of the maximal free interval by
the number of free slots. The move yielding the
configuration with the highest fitness is chosen. Ties are broken by
choosing the first one. The resulting configuration is added to the
tabu list.

%TODO: tabu pseudocode

If the current solution is the best one found so far, it is stored.
The heuristic ends if either a fitness of $1.0$ (i.e., optimality)
is achieved or $2n^2$ iterations have been performed. As seen above,
there are instances for which $\Omega(n^2)$ moves are necessary. 
Moreover, we conjecture that the number of necessary moves is 
in $\Theta(n^2)$. 
\begin{algorithm}[p]\caption{Tabu Search}
%% \begin{figure}[t]
%% \framebox[\textwidth][l]{{\bf Algorithm 2:} Tabu Search}
\begin{tabbing}
\mbox\qquad\=\qquad\=\qquad\=\qquad\=\kill
{\it counter} := 0;\\
{\it tabulist} := \{\};\\
{\it maxfitness} := 0.0;\\

{\bf while} ({\it counter} $\leq 2n^2$) and ({\it maxfitness} $< 1.0$) {\bf do}\\
\> {\bf foreach} module $M_i$ in the array {\bf do}\\
\>\> {\it storedfitness} := 0.0\\
\>\> {\bf if} $M_i$ is homogeneous {\bf then}\\
\>\>\> {\bf foreach} free interval $F_j$ {\bf do}\\
\>\>\>\> move $M_i$ to the left end of $F_j$\\
\>\>\>\> evaluate move\\
\>\>\>\> move $M_i$ to the right end of $F_j$\\
\>\>\>\> evaluate move\\
\>\>\>\> move $M_i$ back to its original position\\
\>\>\> {\bf end foreach}\\
\>\> {\bf else}\\
\>\>\> {\bf foreach} position $P$ for $M_i$ that is feasible and not blocked by another module {\bf do}\\
\>\>\>\> move $M_i$ to $P$\\
\>\>\>\> evaluate move\\
\>\>\>\> move $M_i$ back to its original position\\
\>\>\> {\bf end foreach}\\
\>\> {\bf end if}\\
\>\> {\bf if} {\it storedfitness} $>$ {\it maxfitness} {\bf then}\\
\>\>\> apply {\it storedmove}\\
\>\>\> store {\it storedmove} in {\it tabulist}\\
\>\>\> {\it maxfitness} := {\it storedfitness}\\
\>\>{\bf end if}\\
\> {\bf end foreach}\\
\> {\it counter}++\\
{\bf end while}\\
\\
{\bf Procedure} evaluate move:\\
\> {\bf if} move is not stored in {\it tabulist}\\
\>\> {\it thisfitness} $:=$ size of the maximal free interval / number of free slots \\
\>\> {\bf if} {\it thisfitness} $>$ {\it storedfitness} {\bf then}\\
\>\>\> store move in {\it storedmove}\\
\>\>\> {\it storedfitness} := {\it thisfitness}\\
\>\> {\bf end if}\\
\> {\bf end if}
%% \mbox{}\hrulefill
%% \end{figure}
\end{tabbing}
\end{algorithm}

\section{Experimental Results}
\label{sec:experiments}

\subsection{Compacting an FPGA}
We performed a series of experiments for defragmentation based on scenarios of FPGAs with and 
without heterogeneities and different densities (i.e., different ratios of
occupied space compared to unoccupied space). Fig.~\ref{max:cf1} shows the results
for two FPGAs, both having 94
slots. The first FPGA does not contain any heterogeneities, while
the second one is an FPGA with heterogeneities at 
positions 3, 24, 45, 50, 71 and 82. Moreover, we compared our heuristic
to a simple greedy approach that moves every module to the most
promising position (i.e., to the position for which the ratio of
the size of the maximal free interval and 
the size of the total free space is maximal).

Generating the input was done in two steps, depending on the
size of the maximal free interval $F^\star$. In the first step
the module size is chosen with equal probability from the set
\{$1,\ldots, f^\star$\}. This ensures that the modules can be
inserted. The exact position is chosen again with equal probability
among all feasible positions. If the interval
occupied by the module contains an heterogeneity, this heterogeneity
is assigned to the corresponding position of the module. The size of
the first module is shrunk by a factor of $0.6$ in order to ensure
that it can be moved.

For the density ranging from $0.3$ to $0.9$ with steps of size
$0.05$, we performed 100 runs of the tabu search and the greedy strategy 
for each value and
took the average value of the {\em number of free intervals} and
the {\em size of the maximal free interval}. The results are shown
in %Figures \ref{max:cf0} and 
Fig.~\ref{max:cf1}. The diagrams
show the size of the maximal free interval (top row) of the
array and the number of free spaces (bottom row) 
before and after the defragmentation. In the
array with no heterogeneities (left column),
there is an improvement of up to 40\%.
On the FPGA (right column)
the size of any maximal free interval is limited
to $20$ slots due to the heterogeneities. For a density of less than
$\frac{1}{2}$, the tabu search achieves this upper bound for almost
all instances. For larger densities, it achieves an improvement of
approximately 35\%.

The change in the number of free intervals before and after
defragmentation is displayed in the right charts of Fig.~\ref{max:cf1}.
In the array with no heterogeneities there is an
increase of 50\%. For the FPGA there is almost no improvement for low densities (less than $\frac{1}{2}$) and an improvement of approximately 25\% for larger ones.

\begin{figure*}[p]
 \mbox{}\hfill 
  {\epsfig{figure=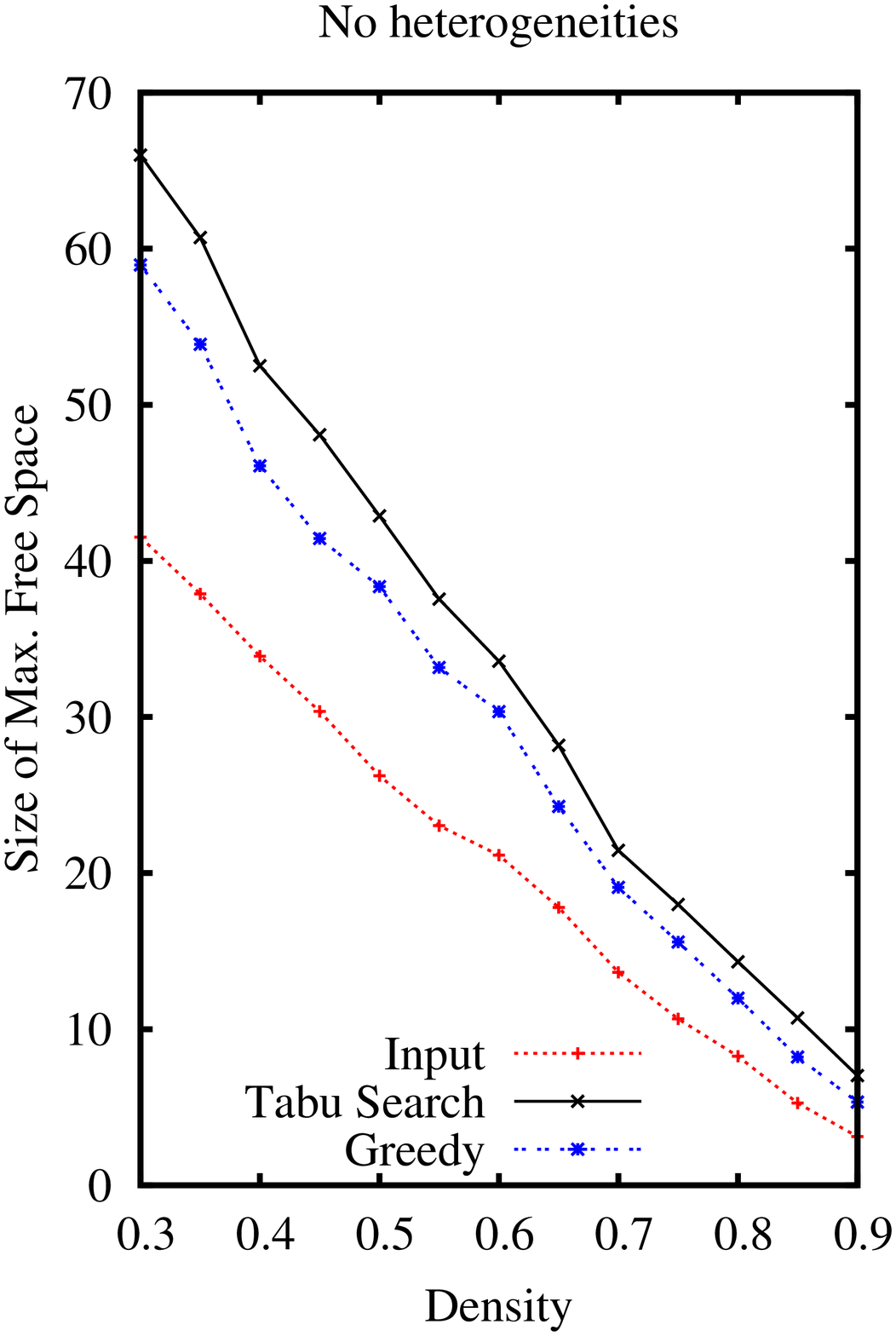,width=67mm}}
\hfill
  {\epsfig{figure=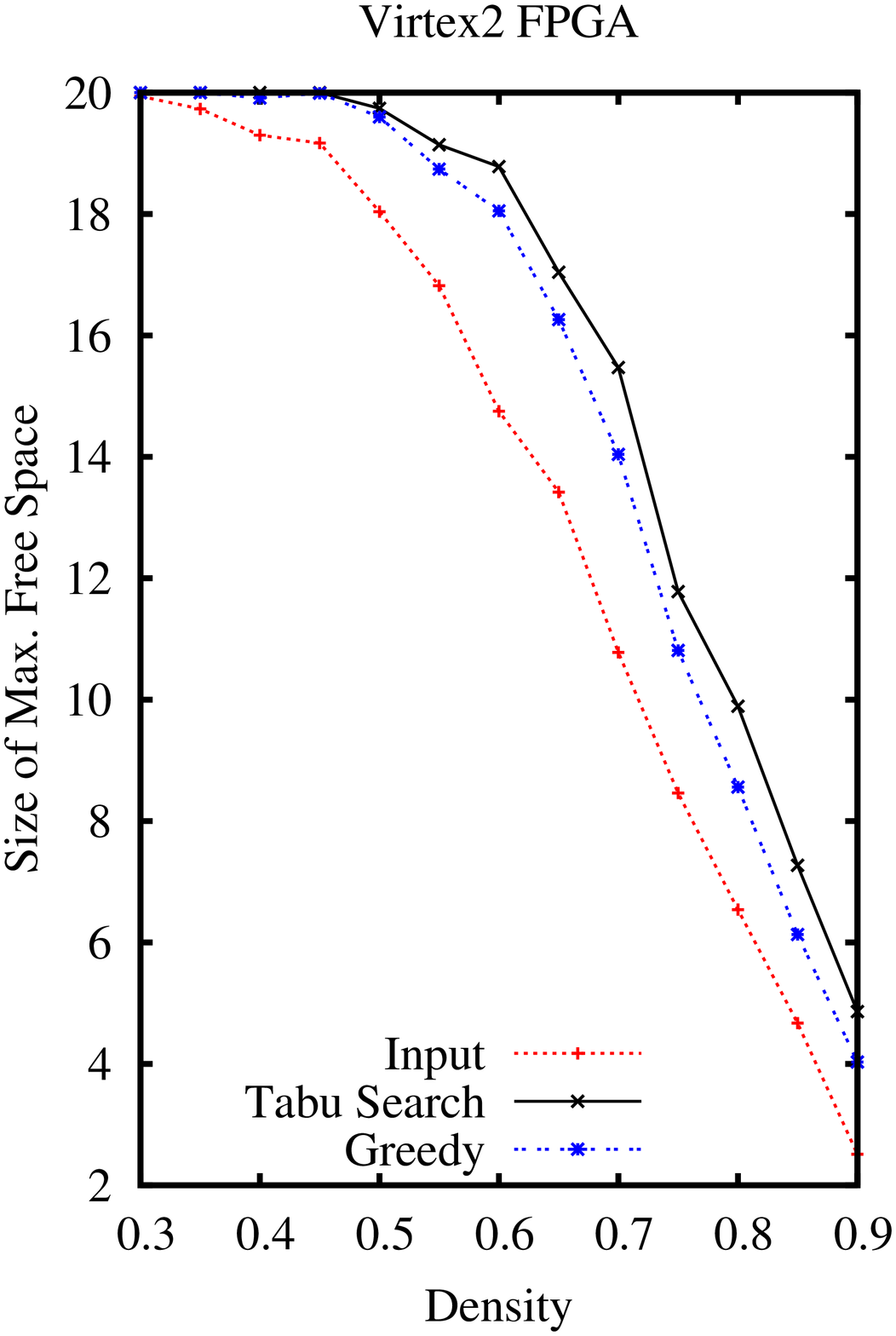,width=67mm}} 
\hfill\mbox{}\\
 \mbox{}\hfill  
{\epsfig{figure=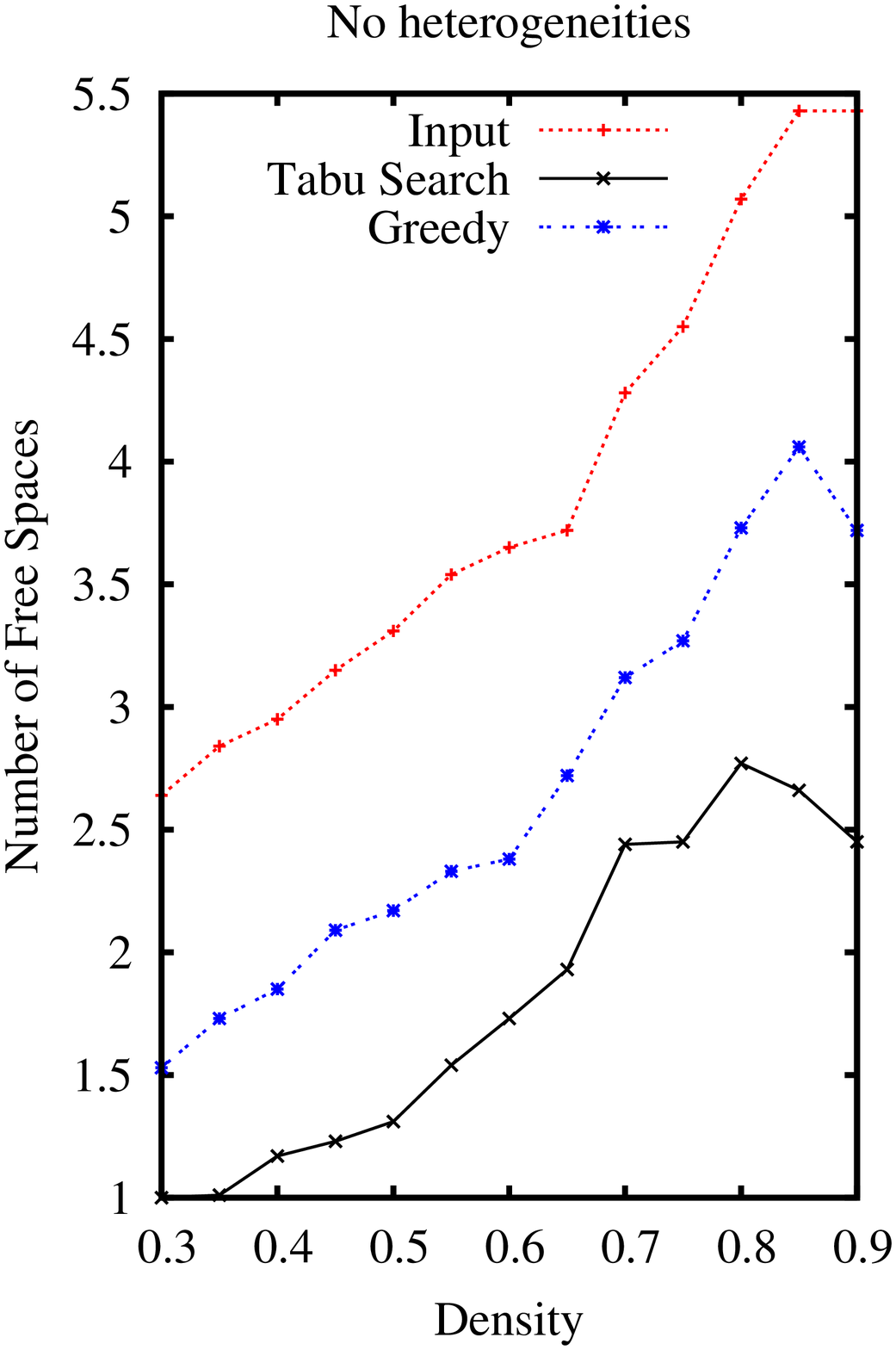,width=67mm}}
\hfill
  {\epsfig{figure=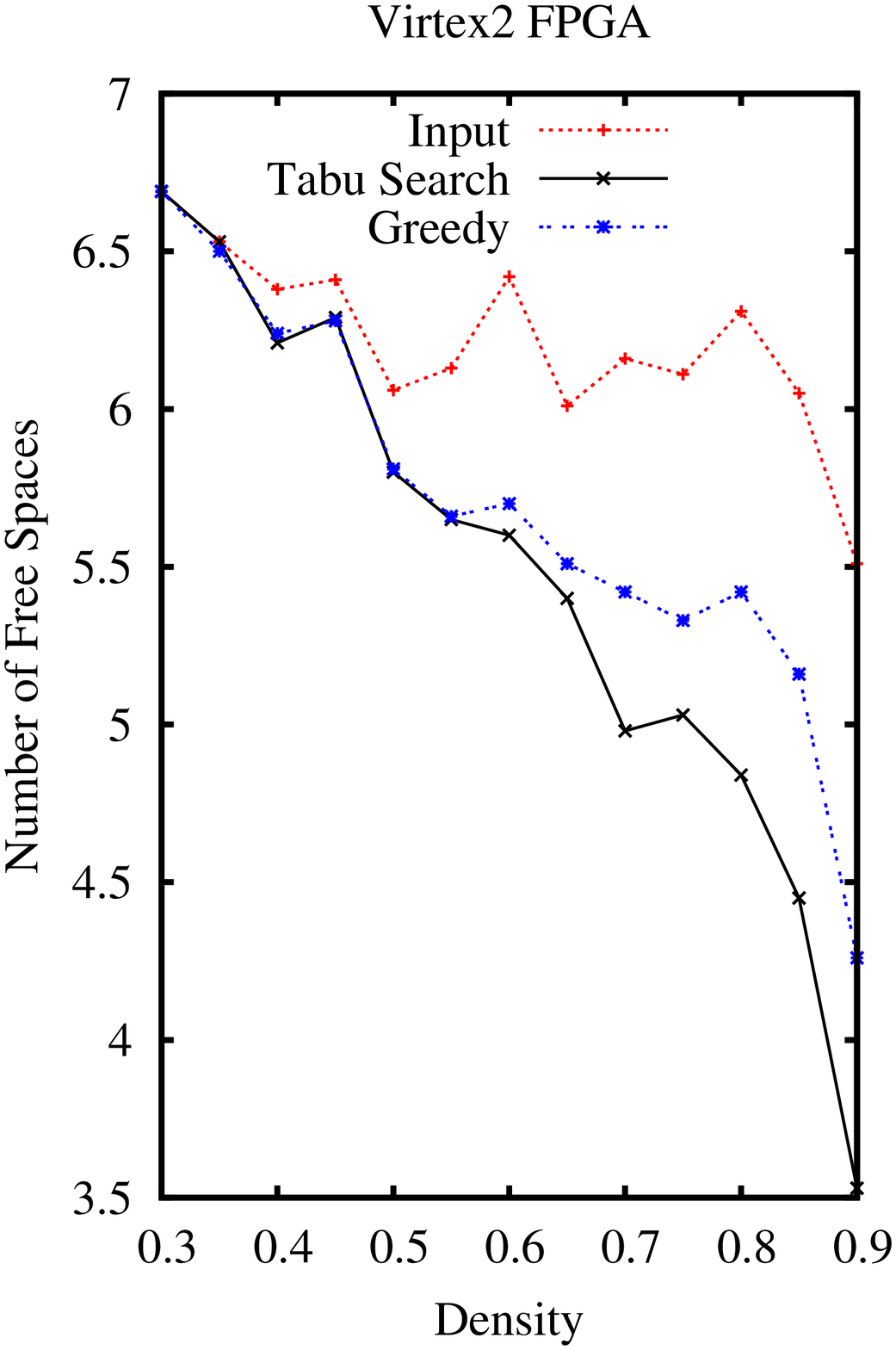,width=67mm}}
\hfill\mbox{}
\caption{Averages over 100 runs. (Top left)
Size of the maximal free interval before and after
defragmentation, using our heuristic and a simple greedy approach
in an array with no heterogeneities.\newline
(Top right)
Size of the maximal free interval before and after
defragmentation of the FPGA. \newline
(Bottom left)
Number of free intervals before and after defragmentation in an array with no heterogeneities.\newline
(Bottom right)
Number of free intervals before and after defragmentation of
the FPGA.\label{max:cf1}}
\end{figure*}

% \begin{figure}[!h]
% \centerline{\epsfig{figure=PLOT-MAX-CF0a.eps,width=50mm}}
% \vspace*{-3mm}
% \caption{Size of the maximal free space before and after
% defragmentation in an array with no heterogeneities.}\label{max:cf0}
% \vspace*{-7mm}
% \end{figure}
%\begin{figure}[!h]
  %\centerline{\epsfig{figure=PLOT-MAX-CF1a.eps,width=50mm}}
%\caption{Size of the maximal free space before and after
%defragmentation of the Virtex 2 FPGA.}\label{max:cf1}
%\end{figure}
%
%\begin{figure}[!h]
  %\centerline{\epsfig{figure=PLOT-NUM-CF0a.eps,width=50mm}}
%\caption{Number of free intervals before and after defragmentation in an
%array with no heterogeneities.}\label{num:cf0}
%\end{figure}
%
%\begin{figure}[!h]
  %\centerline{\epsfig{figure=PLOT-NUM-CF1a.eps,width=50mm}}
%\caption{Number of free intervals before and after defragmentation of
%the Virtex 2 FPGA.}\label{num:cf1}
%\end{figure}

\subsection{Case Study}
\label{sec:casestudy}

In this section, a case study is given that demonstrates the
efficiency of the proposed techniques and how they can be applied to a
real-world scenario. We assume a dynamically partially
reconfigurable device, whose reconfigurable area is separated into
$94$ columns, also called {\em slots}. Modeling typical FPGAs, some of these slots contain
no logic resources, but a heterogeneities such as
BlockRAMs. This setting is illustrated in Figure~\ref{fig:casestudy_initial1}.

Furthermore, assume that one or multiple applications with a collection of
modules are executed on this device; e.g., these could a video
processing and a number cruncher application whose current state can
rather easily be saved and restored at a different position on the
reconfigurable device with moderate costs. During the execution of the
applications, different modules finish and are removed, while new modules
need to be placed. Thus, the free space on the reconfigurable device
can be scattered over the whole reconfigurable area. This situation is
illustrated in the upper part of Figure~\ref{fig:casestudy_greedy}.

\begin{figure}[p]
\centering
%\vspace{-5 mm}
\includegraphics[width=.8\linewidth]{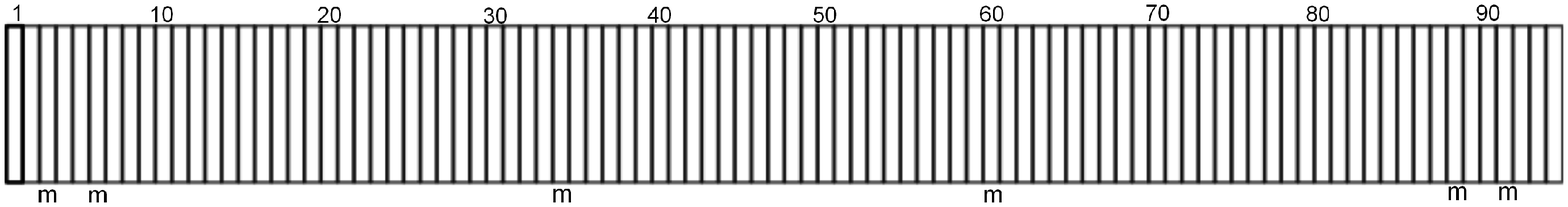}
\caption{\label{fig:casestudy_initial1} Initial state in example scenario. The
reconfigurable device consists of $94$ partially reconfigurable, empty slots.
Those containing a heterogeneity, such as BlockRAMs, are marked below with an
``m''.} \end{figure}

\begin{figure}[p]
\centering
%\vspace{-5 mm}
\includegraphics[width=.8\linewidth]{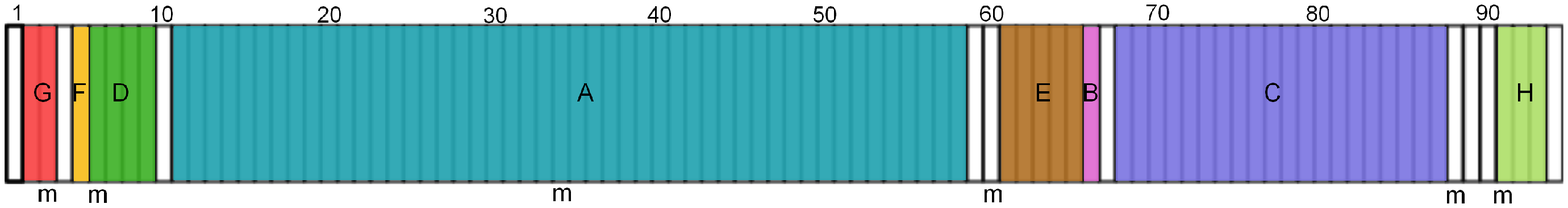}
\includegraphics[width=.8\linewidth]{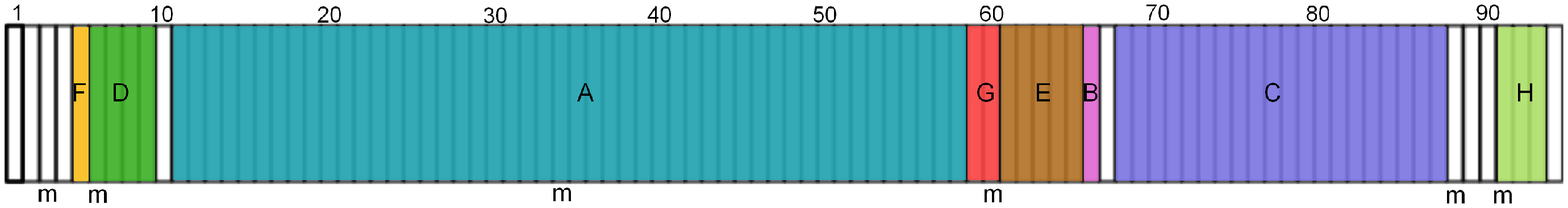}
\caption{\label{fig:casestudy_greedy} (Top) Fragmented state in example scenario. Free space is scattered over the whole reconfigurable area. 
(Bottom) Free space after defragmentation with greedy approach.}
\end{figure}

\begin{figure}[p]
\centering
%\vspace{-5 mm}
\includegraphics[width=.8\linewidth]{DefragCaseStudy_initial2.eps}
\includegraphics[width=.8\linewidth]{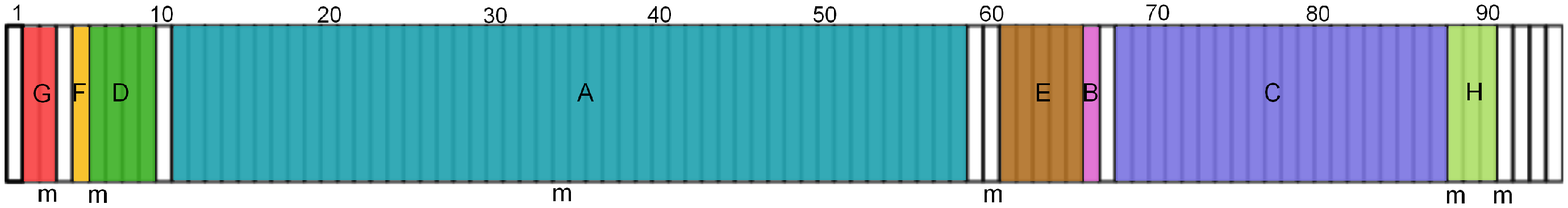}
\includegraphics[width=.8\linewidth]{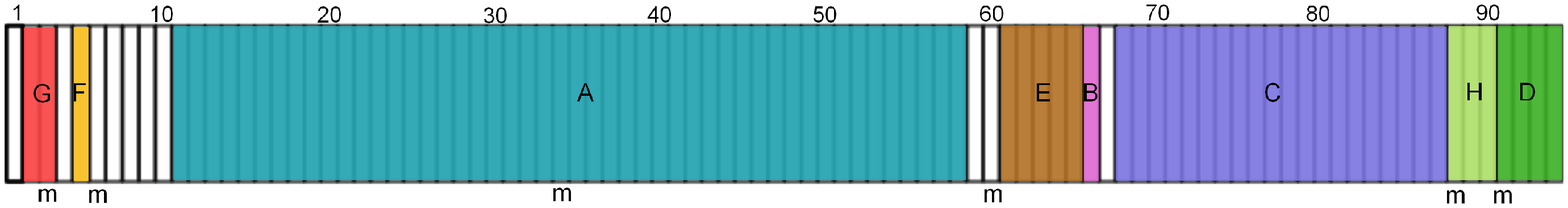}
\includegraphics[width=.8\linewidth]{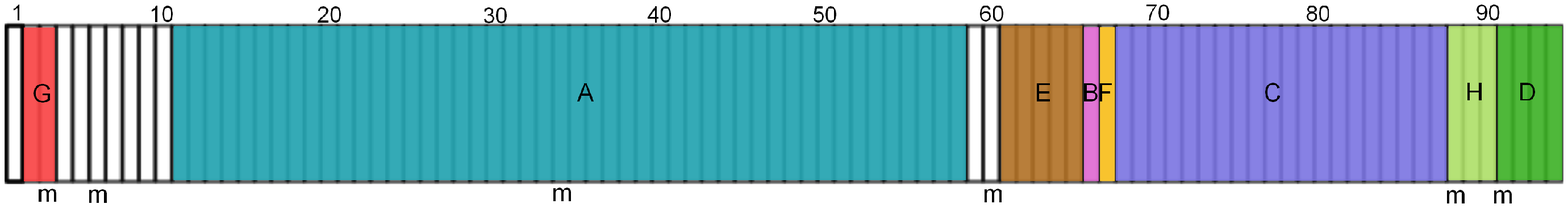}
\includegraphics[width=.8\linewidth]{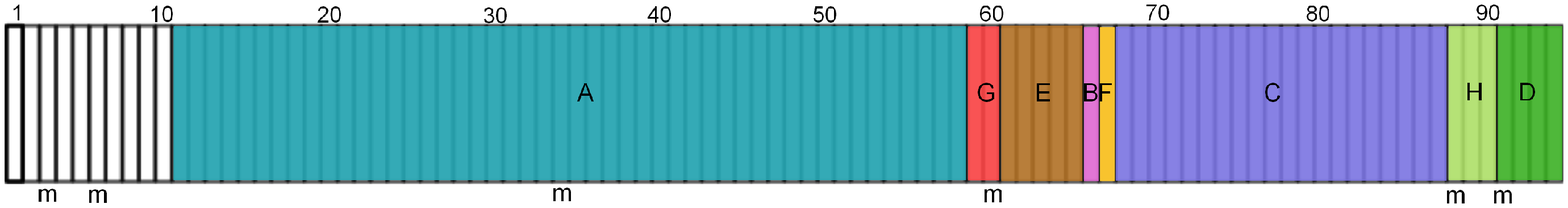}
\caption{\label{fig:casestudy_ts} Tabu-search algorithm: defragmentation in four move steps (shown from top to bottom).}
\end{figure}

The fragmented free space on the reconfigurable area is a common,
unavoidable scenario, for which our proposed defragmentation
techniques represent an applicable and efficient solution. Our first
approach, the greedy algorithm, selects in each setting a step that
optimizes the resulting maximal contiguous free space. Based on the
state of the example in the upper part of Figure~\ref{fig:casestudy_greedy}, the
greedy algorithm moves the module ''G'' to position $59$. Thus, a
biggest free space is achieved within a single move. The
heterogeneity requirements of module ''G'' are fulfilled at the time
at this position: at position $60$, BlockRAMs are provided for the
right part of the module. Afterwards, no single move that
provides an improvement on the maximum free contiguous space is possible. Thus, the
greedy algorithm terminates. Note that the evaluation of each
possible step in the algorithm checks the maximal free space by taking
into account all contiguous free slots, no matter if they contain
heterogeneities or not.

In our second approach, the maximum free contiguous space is optimized
using tabu search, see Figure~\ref{fig:casestudy_ts}. Based on the state of the same example, module ''H'' is relocated to slot
position $88$. This step yields a maximal, contiguous free space of four slots
including a single BlockRAM heterogeneity slot. In a
second step, module ''D'' is moved to slot $91$, where module ''H''
was located before. Thus, a new maximal free space is created starting
at slot $6$ up to $10$. All other steps would have created a free
contiguous space with a size less than $5$ slots. Further, this is also the
only position to which module ''D'' can be moved, due to its heterogeneity
constraints. In a third step, module ''F'' is moved to the single
empty slot without BlockRAMs between module ''B'' and module ''C''. In
a next step, module ''G'' can be either moved to slot position $6$ or
to slot position $59$; both satisfy its heterogeneity
demands. Finally, it is moved to the latter position, because this
results in a maximal free space of $10$ slots. It is also possible
that multiple single steps offer the same increase in contiguous free
space; 
in our current implementation,  one single move is selected randomly.

When the greedy algorithm is applied to the example input, a
contiguous free space of four slots is achieved. In contrast, the
tabu search merges all free space and yields one single contiguous 
block of free space of size $10$. This shows
the usefulness of defragmentation techniques, and the importance of
the corresponding strategy. Similar scenarios of scattered empty space
and heterogeneities on the reconfigurable device are common when
executing modules. New modules with big area requirements must
unnecessarily be delayed without defragmentation steps, which can be
avoided with appropriate defragmentation strategies. How far different
strategies can deviate is shown by comparing the results of the greedy
and the tabu-search approach for this example.

\subsection{Makespan}

\newlength{\makespanfig}
\setlength{\makespanfig}{60mm}
\def\makespanspace{\hspace{-5mm}}

\begin{figure*}[p]
\mbox{}\makespanspace
{\epsfig{figure=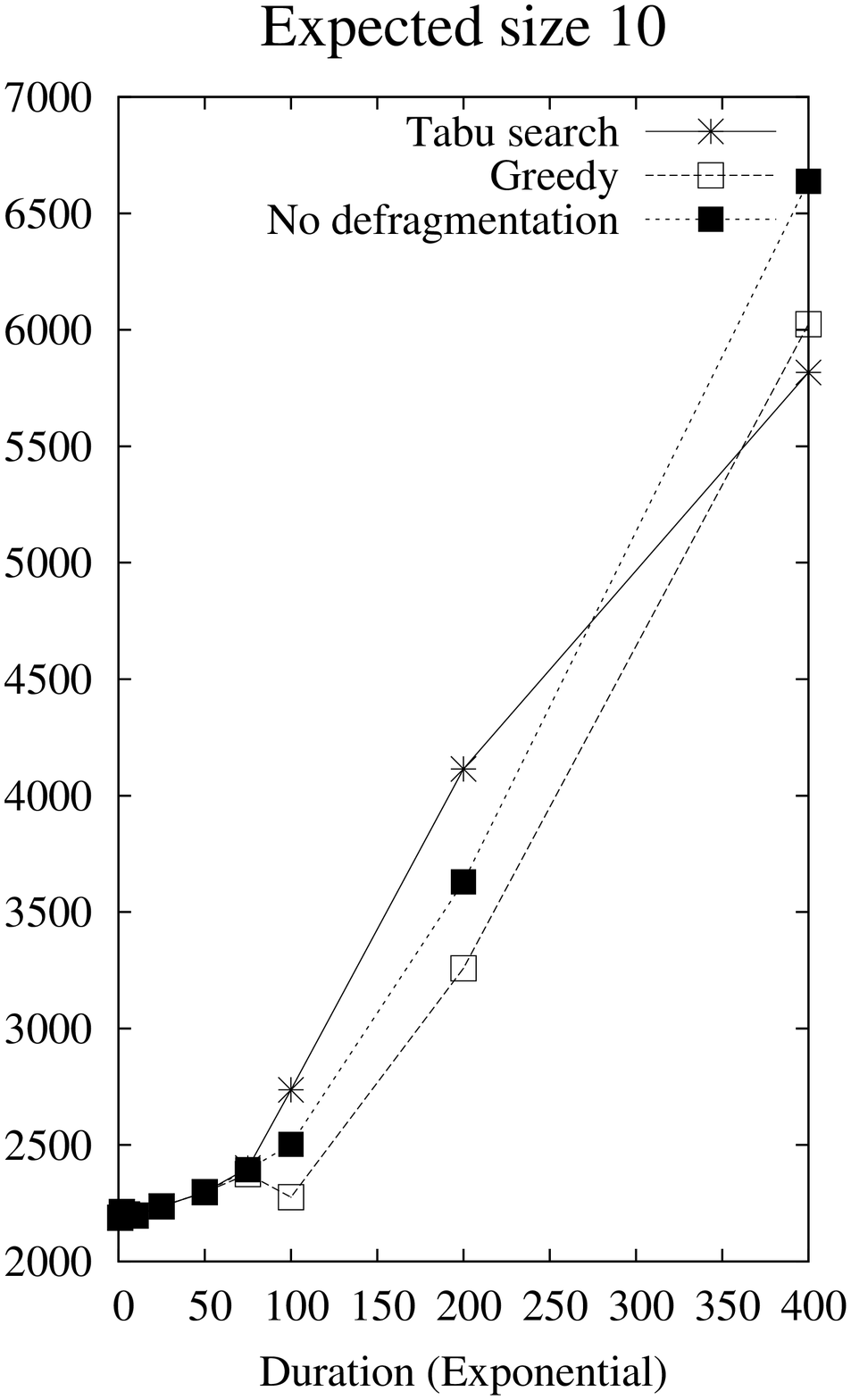,width=\makespanfig}}\makespanspace
{\epsfig{figure=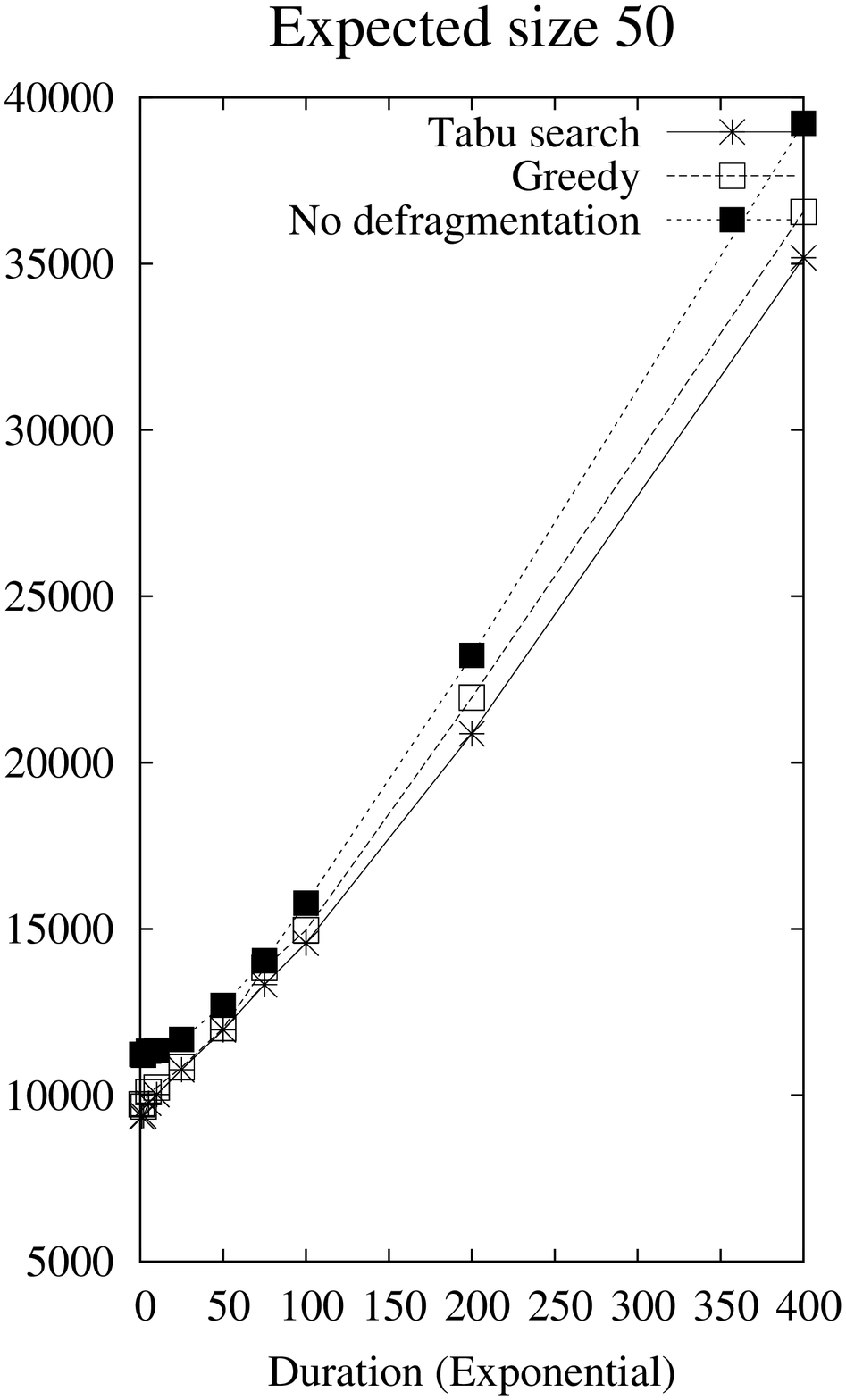,width=\makespanfig}}\makespanspace
%% \mbox{}\hfill {\epsfig{figure=PLOT-4-Makespan-CF0-Sn0100Texxxxxxx.eps,width=\makespanfig}}\makespanspace
{\epsfig{figure=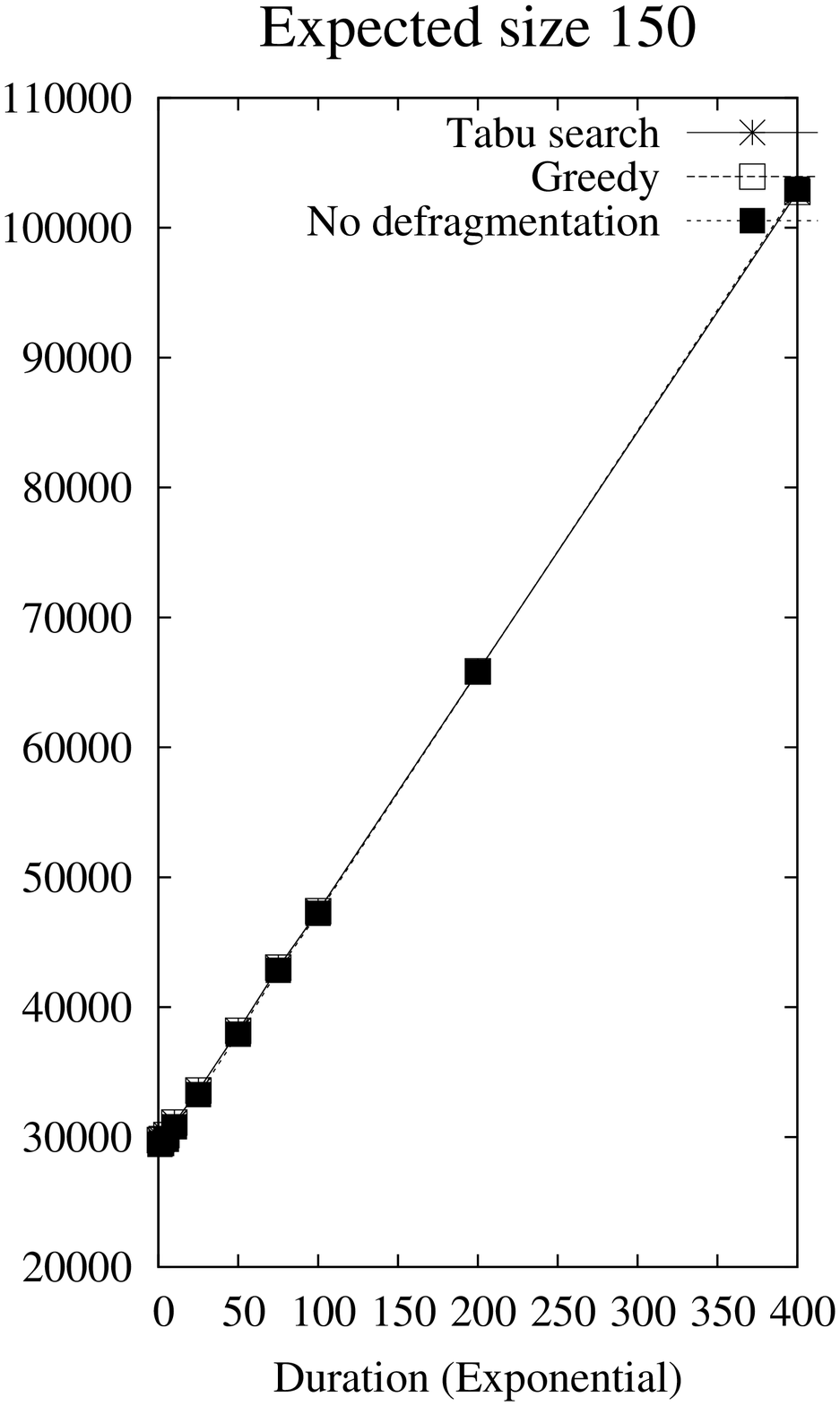,width=\makespanfig}}\makespanspace
%% {\epsfig{figure=PLOT-3-Makespan-CF0-Sh0010Thxxxxxxx.eps,width=\makespanfig}}\makespanspace
%% {\epsfig{figure=PLOT-3-Makespan-CF0-Sh0050Thxxxxxxx.eps,width=\makespanfig}}\makespanspace
%% % {\epsfig{figure=PLOT-3-Makespan-CF0-Sh0100Thxxxxxxx.eps,width=\makespanfig}}\makespanspace
%%  {\epsfig{figure=PLOT-3-Makespan-CF0-Sh0150Thxxxxxxx.eps,width=\makespanfig}}\makespanspace
\mbox{}

\caption{Comparison of makespans for schedules using tabu search, greedy, and no fragmentation for an array of size $\ell=200$.
The average module size is fixed to 10, 50, and 150 columns,
the average duration time ranges from 1 to 400 time units.
The $y$-axis shows the total makespan in time units.
\label{smalltimes-fig}}
\end{figure*}

\begin{figure*}[]
\mbox{}\makespanspace
{\epsfig{figure=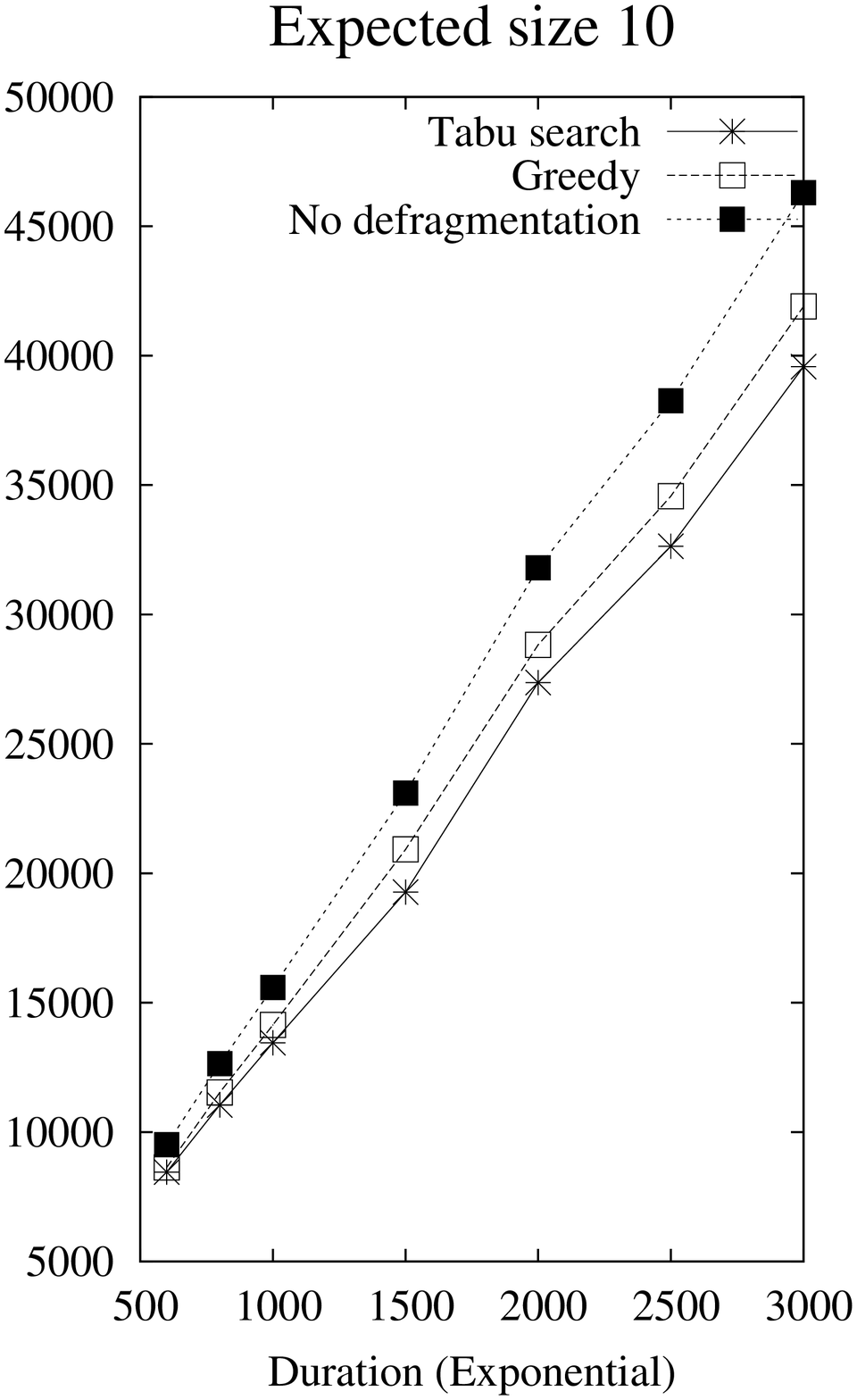,width=\makespanfig}}\makespanspace
{\epsfig{figure=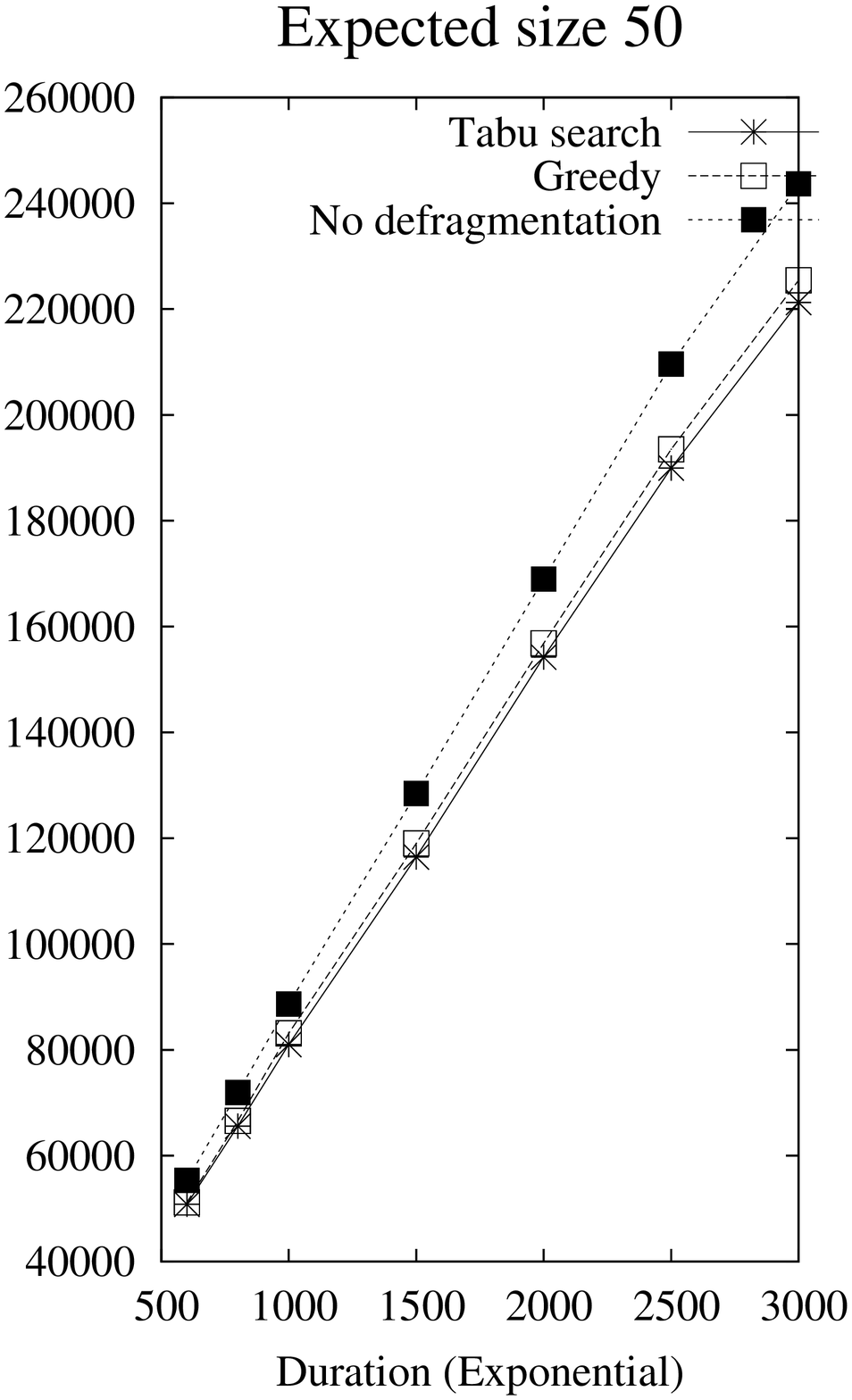,width=\makespanfig}}\makespanspace
{\epsfig{figure=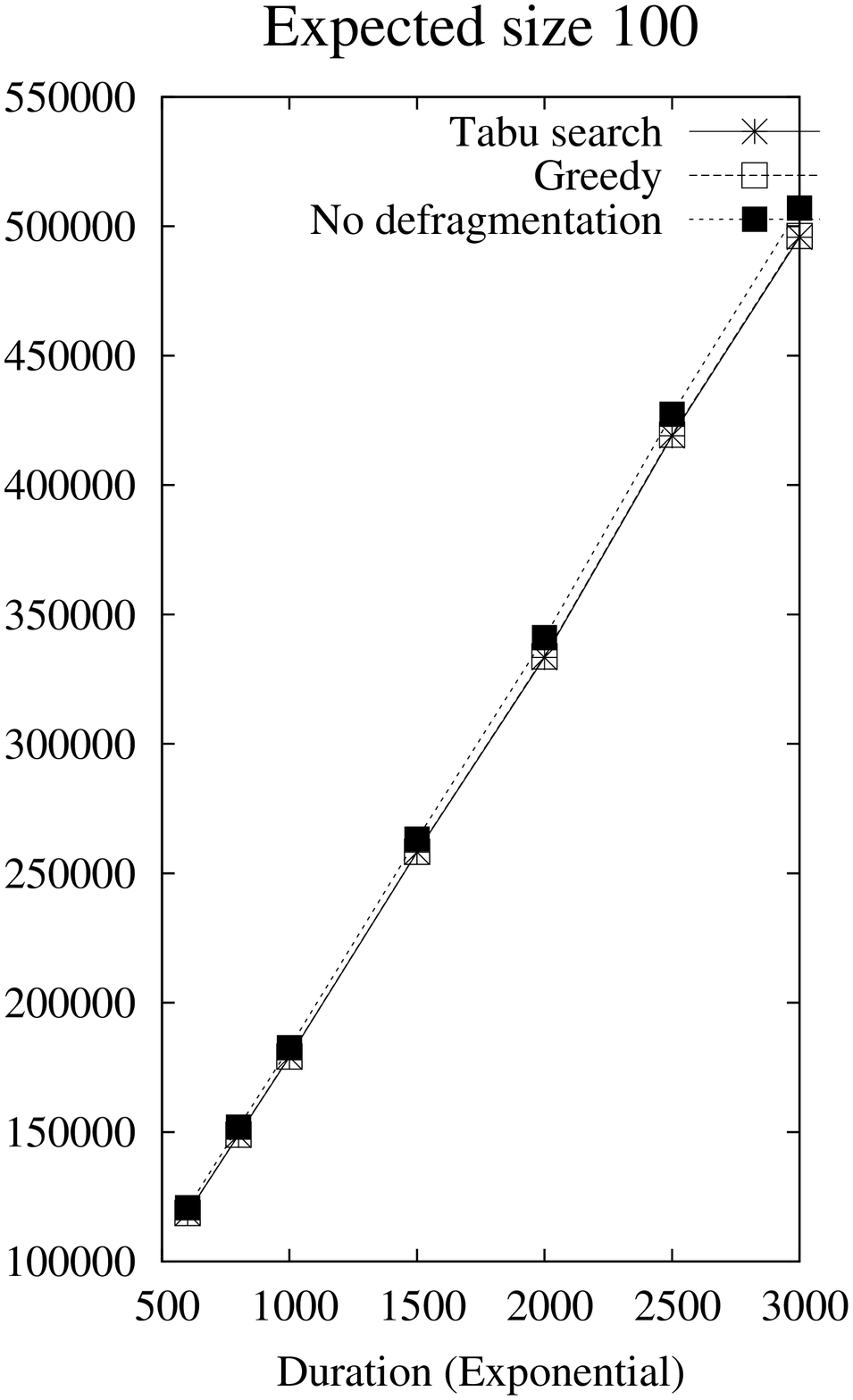,width=\makespanfig}}\makespanspace
%{\epsfig{figure=PLOT-2-Makespan-CF0-Sn0150Texxxxxxx.eps,width=\makespanfig}}\makespanspace
%%  {\epsfig{figure=PLOT-5-Makespan-CF0-Sh0010Thxxxxxxx.eps,width=\makespanfig}}\makespanspace
%%  {\epsfig{figure=PLOT-5-Makespan-CF0-Sh0050Thxxxxxxx.eps,width=\makespanfig}}\makespanspace
%%  % {\epsfig{figure=PLOT-5-Makespan-CF0-Sh0100Thxxxxxxx.eps,width=\makespanfig}}\makespanspace
%%  {\epsfig{figure=PLOT-5-Makespan-CF0-Sh0150Thxxxxxxx.eps,width=\makespanfig}}\makespanspace
\mbox{}
\caption{Comparison of makespans for schedules using tabu search, greedy, and no fragmentation for an array of size $\ell=200$.
The average module size is fixed to 10, 50, and 100 columns,
the average duration time ranges from 600 to 3000 time units.
The $y$-axis shows the total makespan in time units.
\label{bigtimes-fig}}
\end{figure*}

We also simulated the impact on the total makespan (i.e., the total
execution time) by randomly generating sequences of modules. A
sequence consists of 200 modules, for each module we chose size and
duration randomly using different distributions. 
%It turned out that
% the type of random distribution has little impact on the results.
Fig.~\ref{smalltimes-fig} and Fig.~\ref{bigtimes-fig} show examples
in which the size was chosen by normal distribution and duration according to an 
exponential distribution.  We used the exponential distribution for
the duration time, because this distribution models typical life
times~\cite{bn-gs-95}.
% both size and duration with the Weibull distribution. 
We normalized the duration times, i.e.,
we define the time to write a single FPGA column to be 1 time unit.

For each pair of size and duration values, we shuffled 100 sequences and calculated their
makespan by simulating the processing of a sequence using 
tabu search, greedy, and no defragmentation. More precisely, we successively
place the modules into an array that represents the FPGA. 
If we cannot place a module, because there is no sufficient free space,
either the module has to wait (no defragmentation) or we perform
the tabu search or the greedy strategy to compact the FPGA. 
After the duration time for a module elapsed, it is removed from the array.
Our simulation takes the times needed to place or move a module into 
account; the duration time is prolonged accordingly. 

It turned out that it pays off to use defragmentation for larger modules or larger
duration times. Small modules with small duration time enter and leave the system
so quickly that there is no need for defragmentation, 
see Fig.~\ref{smalltimes-fig}(left) up to an average duration of 50 time units.
At smaller module sizes and execution times, greedy's shorter running
time beats the effectiveness of the tabu search 
(Fig.~\ref{smalltimes-fig}(left) from 75 to 350).
However, as the
average module size (as a fraction of the total area) or execution
length increases, the more compact solution provided by the tabu
search provides a better overall execution time, even with increased
overhead
(Fig.~\ref{smalltimes-fig}(left) from 350 and Fig.~\ref{bigtimes-fig}(left)).
For modules of medium size (compared to the size of the FPGA), the tabu search
decreases the total makespan (Fig.~\ref{smalltimes-fig}(middle) and Fig.~\ref{bigtimes-fig}(middle)).
If the average size of a module approaches or even exceeds half the size of the FPGA, the benefit
of compaction disappears (Fig.~\ref{smalltimes-fig}(right) and Fig.~\ref{bigtimes-fig}(right)).
Note that in this case, compaction is often not even possible because the
modules are too large to be moved.

%% \begin{figure*}[p]
%% \mbox{}\hfill {\epsfig{figure=PLOT3D-3-Makespan-CF0-ShxxxxThxxxxxxx.eps,width=9cm,angle=-90}}\hfill\mbox{}\\
%% \mbox{}\hfill {\epsfig{figure=PLOT3D-5-Makespan-CF0-ShxxxxThxxxxxxx.eps,width=9cm,angle=-90}}\hfill\mbox{}
%% \caption{Comparison of makespans for schedules using tabu search, greedy, and no fragmentation.
%% The average module size range from 10 to 150.
%% The average duration from 1 to 100 (top) and from 500 to 3000 (bottom).}
%% \end{figure*}

\section{Conclusion}
\label{sec:conclusion}

In this paper, we presented a new approach for defragmenting the module
layout on a dynamically reconfigurable device, for example a FPGA, in
a seamless fashion. As the reconfiguration costs continuously decrease
with each new generation of reconfigurable devices and a number of
techniques for task preemption and relocation at a different positions
are conceived (see Koch et al.~\cite{KHT07} for a comparison), task
relocation at runtime becomes a new opportunity for improving the
performance and efficiency of reconfigurable devices. However, this
also poses new challenges, because defragmentation methods developed
so far cannot be applied to reconfigurable devices, as they do not take
into account their special characteristics. For example, many
reconfigurable devices have heterogeneities on their reconfigurable
area, such as memory blocks, DSPs, and CPUs. We presented different
defragmentation strategies to relocate running modules and achieve a
contiguous free space of maximum size.

The presented experiments show in average an increase in the maximal
free space by $30\%$ when applying our defragmentation techniques to 
FPGAs with heterogeneities; on some inputs an increase up to $200\%$
is observed. This additional free space allows earlier execution of
later modules,
so the total execution time is
reduced. This shows that it pays off to prefer a sophisticated heuristic 
for defragmentation (e.g., tabu search) 
over a simple heuristic (i.e., greedy), or over no 
defragmentation at all; provided that the execution times and module sizes 
are not too extreme (i.e., too large or too small compared to the size
of the FPGA).

Obviously, improved algorithmic results can lead to further
improvements. One of the possible extensions considers a more
controlled overall placement of modules, instead of simply fixing
fragmentation. As the necessary algorithmic methods are more involved,
we leave this to future work.

\section*{Acknowledgements}
We like to thank the anonymous referees for many valuable 
suggestions.

\small

% \bibliographystyle{abbrv}
% \bibliographystyle{IEEEtran}
%\bibliographystyle{alpha}

% \bibliography{litfpl}

\newcommand{\noopsort}[1]{}

\iffalse
% \input{appendix}
\newpage

\begin{appendix}
\section{Comparison to \cite{fktvakt-nbddr-08}}
All sections of the paper have been thoroughly revised and updated.
In particular, Section 1.2, Section 2 (notably Theorem 2), and Section 4
are substantially enhanced. 
Section 6, Section 8.2, and Section 8.3 contain new material.
\end{appendix}
\fi

\end{document}